\documentclass[11pt]{article}
\usepackage{times} 
%
%

\usepackage{parskip}
\usepackage{latexsym}
\usepackage{amsmath}
\usepackage{amssymb}
\usepackage{xspace}
\usepackage{algorithm}
\usepackage[noend]{algorithmic}
\usepackage{graphicx}
\usepackage{tikz}
\usetikzlibrary{shapes}
\usepackage{color}
\usepackage{epic,eepic}
\definecolor{Grey}{rgb}{0.3,0.3,0.3}
\definecolor{Black}{rgb}{0,0,0}
\definecolor{White}{rgb}{1,1,1}
\usepackage{pst-node}
\usepackage{multido}
\usepackage{multirow}
\usepackage{rotating}

\usepackage{todonotes}

\usepackage{calc,ifthen}
\newcounter{hours}
\newcounter{minutes}
\newcommand{\Printtime}{\setcounter{hours}{\time/60}%
\setcounter{minutes}{\time-\value{hours}*60}%
\thehours:%
\ifthenelse{\value{minutes}<10}{0}{}\theminutes}

\newtheorem{xtheorem}{Theorem}
\newtheorem{xcorollary}{Corollary}
\newenvironment{theorem}{\begin{xtheorem}\rm}{\end{xtheorem}}
\newenvironment{corollary}{\begin{xcorollary}\rm}{\end{xcorollary}}
\newenvironment{proof}{\begin{trivlist}\item[]{\bf Proof }}%
{\hspace*{\fill}\raisebox{-1pt}{\boldmath$\Box$}\end{trivlist}}

\newcommand{\OPT}{\ensuremath{\operatorname{\textsc{Opt}}}\xspace}
\newcommand{\ALG}{\ensuremath{\operatorname{A}}\xspace}

\newcommand{\FOURBUCKET}{\ensuremath{\operatorname{4-\textsc{Bucket}}}\xspace}

\newcommand{\FPA}{\ensuremath{\operatorname{\textsc{FPA}}}\xspace}
\newcommand{\HYBRID}{\ensuremath{\operatorname{\textsc{Hybrid}}}\xspace}

\newcommand{\GEO}{\ensuremath{\operatorname{\textsc{GreedyOpt}}}\xspace}
\newcommand{\GEOA}{\ensuremath{\operatorname{\textsc{GreedyOptAdvice}}}\xspace}
\newcommand{\GEOAC}{\ensuremath{\operatorname{\textsc{GreedyOptAdviceCancel}}}\xspace}
\newcommand{\local}{neighborhood-based\xspace}

\newcommand{\RED}{\ensuremath{\operatorname{\textsc{Red}}}\xspace}
\newcommand{\GREEN}{\ensuremath{\operatorname{\textsc{Green}}}\xspace}
\newcommand{\BLUE}{\ensuremath{\operatorname{\textsc{Blue}}}\xspace}

\newcommand{\classred}{\ensuremath{\operatorname{R}}\xspace}
\newcommand{\classgreen}{\ensuremath{\operatorname{G}}\xspace}
\newcommand{\classblue}{\ensuremath{\operatorname{B}}\xspace}

\newcommand{\colors}{\ensuremath{\operatorname{Colors}}\xspace}
\newcommand{\class}{\ensuremath{\operatorname{Class}}\xspace}
\newcommand{\borrow}{\ensuremath{\operatorname{Borrow}}\xspace}
\newcommand{\phasemin}{\ensuremath{\operatorname{Phase3Min}}\xspace}
\newcommand{\phasemax}{\ensuremath{\operatorname{Phase3Max}}\xspace}
\newcommand{\maxcolor}{\ensuremath{m}\xspace}


\newcommand{\MAX}[1]{\max\left\{#1\right\}}
\newcommand{\CEIL}[1]{\left\lceil#1\right\rceil}
\newcommand{\FLOOR}[1]{\left\lfloor#1\right\rfloor}
\newcommand{\SIZE}[1]{|#1|}
\newcommand{\SET}[1]{\ensuremath{\left\{#1\right\}}\xspace}

\newcommand{\UPPER}{\ensuremath{\operatorname{\textit{upper}}}\xspace}
\newcommand{\LOWER}{\ensuremath{\operatorname{\textit{lower}}}\xspace}

\newcommand{\REQ}[2]{\ensuremath{\operatorname{Req}(#1,#2)}}

\newcommand{\ENCODE}[1]{\ensuremath{\operatorname{\textit{enc}}(#1)}\xspace}

\newcommand{\SEQ}[1]{\ensuremath{\left\langle #1 \right\rangle}}
\newcommand{\eps}{\ensuremath{\varepsilon}\xspace}
\newcommand{\nats}{\ensuremath{\mathbb{N}}\xspace}

\usepackage{mathtools}
 \newcommand{\EXPLAIN}[2]{\underset{\mathclap{\overset{\substack{\vspace{1ex}\\\uparrow}}{\mathclap{\substack{#2}}}}}{#1}}
\newcommand{\EXPLAINUP}[2]{\overset{\mathclap{\underset{\downarrow\rule{0em}{1ex}}{\mathclap{\substack{#2}}}}}{#1}}

\newcommand{\PIND}{ \smash{\EXPLAIN{0}{\text{partition}\\\text{indicator}}} }
\newcommand{\SBIT}{ \EXPLAINUP{1}{\text{stop}\\\text{bit}} }

\begin{document}

\title{Online Multi-Coloring with Advice\,\thanks{Supported
in part by the Danish Council for Independent Research and the Villum Foundation. An extended abstract will appear in the
Twelfth Workshop on Approximation and Online Algorithms (WAOA),
Lecture Notes in Computer Science. Springer, 2014.}}

\author{Marie G. Christ \hspace{2em} Lene M. Favrholdt \hspace{2em} Kim S. Larsen \\[1ex]
        University of Southern Denmark \\
        Odense, Denmark \\[1ex]
        {\tt \{christm,lenem,kslarsen\}@imada.sdu.dk}}

\maketitle

\begin{abstract}
We consider the problem of online graph multi-coloring with advice.
Multi-coloring is often used to model frequency allocation in
 cellular networks.
We give several nearly tight upper and lower bounds for the most standard
 topologies of cellular networks, paths and hexagonal graphs.
For the path, negative results trivially carry over to bipartite
 graphs, and our positive results are also valid for bipartite graphs.
The advice given represents information that is likely to be available,
 studying for instance the data from earlier similar periods of time.
\end{abstract}

\section{Introduction}
We consider the problem of graph multi-coloring, where each node may receive
 multiple requests. 
Whenever a node is requested, a color must be assigned to the node,
 and this color must be different from any color previously assigned
 to that node or to any of its neighbors.
The goal is to use as few colors as possible.
In the online version, the requests arrive one by one, and each
 request must be colored without any information about possible future
 requests. The underlying graph is known to the online algorithm in advance.

The problem is motivated by frequency allocation in
cellular networks. These networks are formed by a number of base
transceiver stations, each of which covers what is referred
to as a cell. Due to possible interference, neighboring cells
cannot use the same frequencies.
In this paper, we use classic terminology and refer to these
cells as nodes in a graph where nodes are connected by an edge
if they correspond to neighboring cells in the network.
Frequencies can then be modeled as colors. Multiple requests for
frequencies can occur in one cell and overall bandwidth is a
critical resource.

Two basic models dominate in the discussion of cellular networks,
the highway and the city model. The former is modeled by
linear cellular networks, corresponding to paths, and the latter
by hexagonal graphs.
We consider the problem of multi-coloring such graphs.

\subsection{Analyzing online algorithms}
%
If \ALG is a multi-coloring algorithm, we let $\ALG(I)$ denote the
number of colors used by \ALG on the input
sequence $I$. 
When $I$ is clear from the context, we simply write $\ALG$ instead of $\ALG(I)$.
The quality of an online algorithm is often given in terms of the
competitive ratio~\cite{ST85j,KMRS88j}. An online multi-coloring algorithm is \emph{$c$-competitive}
if there exists a constant $\alpha$ such that for all input
sequences $I$, $\ALG(I)\leq c\OPT(I)+\alpha$.
The (asymptotic) \emph{competitive ratio} of \ALG is the infimum over all such $c$.
Results that can be established using $\alpha=0$ are referred to
as \emph{strict} (or absolute).
Often, it is a little unclear when one
refers to an \emph{optimal} online algorithm, whether this means
that the solution produced is as good as the one produced offline or
that no better online algorithm can exist.
For that reason, we may use the term \emph{strictly $1$-competitive}
to emphasize that an algorithm is as good as an optimal offline algorithm,
and \emph{optimal} to mean that no better online algorithm exists
under the given conditions.
Throughout,
we let $n$ denote the number of requests in a given input sequence.

\subsubsection{Relaxing the concept of online}
A way of relaxing the very strict and unnatural assumption
that the algorithm has no information about the input
sequence is to give the algorithm some 
\emph{advice}. 
The possibly most famous online problem of paging, where
no deterministic online algorithm is better than $k$-competitive on a cache size of~$k$,
can be solved optimally with just one bit of advice
per request, saying whether to keep the requested page in cache until its next request~\cite{DKP09,BKKKM09}.

A recent trend in the analysis of online algorithms has been to
consider advice, formalized under the notion of {\em advice complexity},
starting in~\cite{DKP09}.
Theoretically, results along these lines give some information
in the direction of the hardness stemming from the problem being online, relaying
information concerning how much we need to know about the future
to perform better. For practical applications, the assumption that
absolutely nothing is known about the future is often unrealistic,
and though many problems must be addressed without knowing in which
order requests arrive, quite often something is known about the
sequence of requests as a whole.

This realization that input is not arbitrary (uniformly random, for instance)
is not new, and
work focused on locality of reference in input data has tried to
capture this. Early work includes access graph results, starting
in~\cite{BIRS95}, and with references to additional
related work in~\cite{BGL12p}, but also more distributional
models, such as~\cite{AFG05}, have been developed.
An entirely different approach was initiated in~\cite{BL99j}
and further developed in~\cite{BLN01j,BFLN03j}.
The idea behind the concept of accommodating sequences is that
for many problems requiring resources, there is a close connection
between the resources available and the resources required for
an optimal offline algorithm, as when capacity of transportation systems
are matched with expected demand.
This leans itself very closely up against many of the results
that we report here, where the advice needed to do better
is often some information regarding the resources required by an optimal offline algorithm.

Thus, the results in this paper could have practical applications.
The results establish which type of information is useful,
how algorithms should be designed to exploit this information,
and what the limits are for what can be obtained.

\subsubsection{Modeling advice complexity} 
Returning to the advice complexity modeling,
some problems need very little advice.
On the other hand, complete information about the input or the desired output
is a trivial upper bound on the amount of advice needed to be optimal.
The first approach to formalizing the concept of advice measured the
number of bits per request~\cite{DKP09}.
This model is well suited for some problems where
information is tightly coupled with requests and the number of bits
needed per request is constant.
However, for most problems, we prefer the model where we simply measure
the total advice needed throughout the execution of
the algorithm.
As also discussed in~\cite{BKKKM09,HKK10}, this model avoids some modeling
 issues present in the ``per request'' 
modeling, and at the same time makes it possible to derive
sublinear advice requirements.
Thus, we use the advice model from~\cite{HKK10},
where the online algorithm has access to an infinite advice tape,
written by an offline oracle with infinite computation power.
In other words, the online algorithm can ask for the answer to any question
and read the answer from the tape.
Competitiveness is defined and measured as usual,
and the advice complexity is 
simply the number of bits read from the tape,
i.e., the maximum index of the bits read from the advice tape.

As the advice tape is infinite, we need to specify how many bits of advice the algorithm should read and if this knowledge is not implicitly available, it has to be given explicitly in the advice string. For instance, if we want \OPT
as advice (the number of colors an optimal offline algorithm uses on a given sequence, for instance),
then we cannot merely read $\CEIL{\log(\OPT+1)}$
(all $\log$s in this paper are base 2)
bits, since this would
require knowing something about the value of \OPT.
One can use a self-delimiting encoding as introduced in~\cite{E75}.
We use the variant from~\cite{BKLL12}, defined as follows: 
The value of a non-negative integer $X$ is encoded by 
a bit sequence, partitioned into three consecutive parts.
The last part is $X$ written in binary.
The middle part gives the number of bits in the last part, written in
binary.
The first part gives the number of bits in the middle part, written in
unary and terminated with a zero.
These three parts require $\CEIL{\log(\CEIL{\log (X + 1)}+1)}+1$,
$\CEIL{\log(\CEIL{\log (X + 1)}+1)}$, and $\CEIL{\log(X+1)}$ bits,
respectively,
adding a lower-order term to the number of bits of information required by an
algorithm.
We define
$\ENCODE{x}$ to be the minimum number of bits necessary to encode
a number $x$,
and note that the
encoding above is a (good)
upper bound on $\ENCODE{x}$.

\subsection{Previous and new results}
We now discuss previous work related to multi-coloring and advice complexity and then state our results.
When working with online algorithms,
decisions are generally irrevocable, i.e., once
a color is assigned to a node, this decision is final.
However, in some applications, local changes of colors may be allowed
(reassignment of frequencies).
This is called {\em recoloring}.
An algorithm is {\em $d$-recoloring} if, in the
process of treating a request,
it may recolor up to a distance~$d$ away from the node of the
request.

\subsubsection{Previous results}

For multi-coloring a path, the algorithm \FOURBUCKET is
 $\frac{4}{3}$-competitive~\cite{CS10}, and this is
 optimal~\cite{CCYZZ06p}. 
Even with $0$-recoloring allowed (that is, colors at the requested node
 may be changed), \FOURBUCKET is optimal~\cite{CFL13}.
Furthermore, if
$1$-recoloring is allowed, the algorithm \GEO is strictly
 $1$-competitive~\cite{CFL13}. 

For multi-coloring bipartite graphs, the optimal asymptotic
 competitive ratio lies between $\frac{10}{7} \approx
 1.428$ and $\frac{18-\sqrt{5}}{11}\approx 1.433$~\cite{CJS13}.

In~\cite{CCYZ10}, it was shown that, for hexagonal graphs, no online
 algorithm can be better than $\frac{3}{2}$-competitive or have a
 better strict competitive ratio than $2$.
They also gave an algorithm, \HYBRID, with an asymptotic competitive
 ratio of approximately $1.9$ on hexagonal graphs.
On $k$-colorable graphs, it is strictly $\frac{k+1}{2}$-competitive,
 and hence, it has an optimal strict competitive ratio on hexagonal
 graphs.
%
Recoloring was studied in~\cite{JKNS00j}:
No $d$-recoloring algorithm for hexagonal graphs has an asymptotic
 competitive ratio better than $1+\frac{1}{4(d+1)}$. 
For $d=0$, the lower bound was improved to~$\frac{9}{7}$.
In~\cite{SZ05j}, a $\frac{4}{3}$-competitive $2$-recoloring algorithm
is given.
The best known $1$-recoloring algorithm for hexagonal graphs is
 $\frac{33}{24}$-competitive~\cite{WZ11p}.   
%
For the offline problem of multi-coloring hexagonal graphs, no
 polynomial time algorithm can obtain an absolute approximation ratio
 better than $\frac43$~\cite{McDR00,NS01,NS02}, unless
 $\text{P}=\text{NP}$. 

Many other problems have been considered in the advice models,
including paging~\cite{BKKKM09}, disjoint path allocation~\cite{BBFGHKSS14},
and job shop scheduling~\cite{BKKKM09},
as well as
$k$-server~\cite{BKKK11}, knapsack~\cite{BKKR12},
set cover~\cite{KKM12}, metrical task systems~\cite{EFKR11}, and
buffer management~\cite{DHZ12}.
Also graph coloring has been considered, but in a very different
online setting, where the graph itself is not available from the
beginning. Instead, the nodes are revealed one by one and results
have been obtained for paths~\cite{FKS12}, bipartite graphs~\cite{BBHK12},
and 3-colorable graphs~\cite{SSU13}.
In~\cite{BBHKS13}, a coloring problem with restrictions going beyond the
immediate neighbors is considered.
Furthermore, there are interesting connections between advice and
 randomization and sometimes results on advice complexity can be used
 to obtain efficient random algorithms~\cite{BKKKM09,KK11j,BKKR12}.

\begin{table}[t]
\begin{center}
\begin{small}
\begin{tabular}{|c||c||c|c|c||c|c|c|} \hline
      & Ratio            & Lower & Type & Thm         
                         & Upper & Type & Thm \\ \hline\hline
\multirow{5}{*}[4ex]{\begin{sideways}\makebox[0ex]{Paths}\end{sideways}}
      & $1$              & $\log n-2$ 
                         & s & \ref{theorem:lowerbound_path}  
                         & $\log n + O(\log \log n)$ 
                         & s & \ref{theorem:bipartite-upperbound-algorithm} \\
     & $1+\frac{1}{2^b}$ & $b-2$ 
                         & a & \ref{theorem:path-loweradvicebound}        
                         & $b+1 + O(\log \log n)$ 
                         & s & \ref{theorem:oneplus-competitivealgorithm}  \\
     & $<\frac{4}{3}$    & $\omega(1)$ 
                         & a & \ref{theorem:fourthirds}
                         &&&    \\ \hline\hline
\multirow{5}{*}[2ex]{\begin{sideways}\makebox[0ex]{Hexagonal}\end{sideways}}
     & $1$              &&
                         && $(n+1)\CEIL{\log n}$ 
                         & s & \ref{theorem:optimal-hex-upper} \\
     & $<\frac{5}{4}$    & $\Omega(n)$ 
                         & a & \ref{theorem:hexfivefour} 
                         &&& \\
     & $\frac{4}{3}$     &&               
                         && $n+2\SIZE{V}$ & a & \ref{theorem:43-hex}    \\
     & $\frac{3}{2}$     & $\FLOOR{\frac{n-1}{3}}$ 
                         & s & \ref{theorem:optimal-hex} 
                         & $\log n + O(\log\log n)$ 
                         & a & \ref{theorem:approx32}  \\ \hline
\end{tabular}
\end{small}
\end{center}
\caption{Overview of our results.
Recall that $n$ denotes the number of requests in the input sequence.
We mark the ratios that are strict by ``s'' and the ones that are
asymptotic by ``a''.
Note that a strict lower bound can be larger than an asymptotic upper bound.
For each bound, we indicate the number of the theorem proving the result.
For readability, many of the bounds stated are weaker than those
 proven in the paper.
Moreover, the upper bounds for the path 
 hold for any bipartite graph.
The result of Theorem~\ref{theorem:fourthirds} in the third row of the
table is valid only for {\em \local} algorithms, as defined just before
Theorem~\ref{theorem:fourthirds} in Section~\ref{sec:bipartite}.
\label{table-results}}
\end{table}

\subsubsection{Our results}
An overview of our results is given in Table~\ref{table-results}.
For the path, these results are nearly tight, even with upper bounds
 that also apply to bipartite graphs.
For hexagonal graphs, note that with a linear number of advice bits,
 it is possible to be $\frac43$-competitive, and the lower bound for
 being better than $\frac54$-competitive is close to linear.
The advice given to the algorithms is essentially (an approximation
 of) \OPT or the maximum number of requests given to any clique in the
 graph.
For the underlying problem of frequency allocation, guessing these
 values based on previous data may not be unrealistic.

\section{The Path}
\label{sec:bipartite}
As explained earlier, we establish all lower bounds for paths, and since a
path is bipartite, all these negative results carry over to bipartite graphs.
Similarly, all our (constructive) upper bounds are given
for bipartite graphs and therefore also apply to paths.
We start with three lower bound results.

\subsection{Lower bounds}

\begin{theorem}\label{theorem:lowerbound_path}
 Any strictly $1$-competitive
 online algorithm for multi-coloring paths
 of at least 10 nodes has advice complexity at least
 $\CEIL{\log(\FLOOR{\frac{n}{4}}+1)}$.
\end{theorem}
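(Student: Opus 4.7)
The plan is an information-theoretic adversary argument. I would construct $m+1 = \lfloor n/4 \rfloor + 1$ input sequences $I_0, \ldots, I_m$, each of length exactly $n$, on a path with at least $10$ nodes, all sharing a common initial request prefix $P$, such that strict $1$-competitiveness on $I_j$ forces the algorithm to produce a specific coloring $\pi_j$ of $P$, with the $\pi_j$ pairwise incompatible even up to renaming of colors. Since the algorithm's behavior on the identical prefix $P$ is determined only by the advice read, $b$ bits of advice induce at most $2^b$ distinct prefix behaviors, yielding $2^b \geq m+1$ and hence $b \geq \lceil \log(m+1) \rceil$.

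The core of the construction is a small decision gadget, which I would first illustrate for the minimal case $n = 4$ on the $10$-node path $v_1, \ldots, v_{10}$. Consider the inputs $I_a = v_5, v_2, v_3, v_4$ and $I_b = v_5, v_2, v_8, v_{10}$, sharing the prefix $v_5, v_2$. A short case analysis shows that $I_a$ has $\OPT = 2$ and forces the algorithm to color $v_2$ with a color different from that of $v_5$ (otherwise the chain $v_3, v_4$ would require a third color), whereas $I_b$ has $\OPT = 1$ and forces $v_2$ to share $v_5$'s color (otherwise two colors are used on a $1$-colorable input). At least one bit of advice is thus already necessary for $n = 4$, matching $\lceil \log 2 \rceil$.

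For general $n$ I would extend the common prefix $P$ to open up additional binary color decisions at further nodes along the path (roughly one decision per $4$ requests of the input), and design extensions $E_0, \ldots, E_m$ that each verify a different specific configuration of these required decisions, so that the induced $\pi_j$ are pairwise distinct. The factor $4$ in the denominator of $n/4$ arises because each decision consumes roughly $2$ requests for prefix setup and $2$ requests for extension verification, analogous to the base gadget. The hard part will be showing that the embedded decisions can be made effectively independent of one another: a required color choice at one decision node should not constrain the feasibility of a required choice at another. I would handle this by placing the decision nodes at sufficient mutual distance along the path and exploiting the bipartite structure of paths, which allows non-adjacent vertices to share colors freely, so that the optimality constraint factorises across the decisions.
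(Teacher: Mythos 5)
Your overall framework (a set of $\FLOOR{\frac{n}{4}}+1$ sequences with a common prefix whose forced prefix colorings are pairwise incompatible, giving a $\CEIL{\log(m+1)}$ bound via a counting argument) is exactly the paper's framework, and your base gadget for $n=4$ is correct. The gap is in the generalization. You propose $\Theta(n)$ binary decision gadgets placed at mutual distance along the path and claim that ``the optimality constraint factorises across the decisions.'' It does not. The color budget of a strictly $1$-competitive algorithm is the \emph{global} optimum, which is the maximum, not the sum, of the local requirements. A gadget of your ``type $b$'' (isolated follow-up requests, local optimum $1$) forces its two prefix nodes to share a color only when the global optimum is $1$; as soon as any other gadget in the same sequence is of ``type $a$'', the global optimum is $2$ and the type-$b$ gadget forces nothing, since the algorithm may legally spend two colors in that region. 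Consequently, a single prefix coloring that makes every decision pair unequal is optimal for every sequence containing at least one type-$a$ gadget, and your family collapses to essentially two distinguishable scenarios, i.e., one bit of advice. Nor can you repair this with a different gadget: on a path the parity of the distance between the two prefix nodes is fixed by the prefix, so with single requests only the ``unequal'' outcome is ever forceable while keeping the local optimum at $2$. (This is precisely why the paper's $\Omega(n)$ bound with independent binary gadgets lives on hexagonal graphs, Theorem~\ref{theorem:optimal-hex}, where both outcomes are forceable against a fixed global optimum of $2$.) A further, independent problem is that the theorem must hold on paths with exactly $10$ nodes, whereas $\Theta(n)$ gadgets at mutual distance require $\Theta(n)$ nodes.

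The paper avoids all of this by using request multiplicities on a \emph{single} gadget rather than many binary gadgets: $I_i$ gives $m=\FLOOR{\frac{n}{4}}$ requests to each of $v_1$ and $v_4$ (the common prefix of length $2m$), then $i$ requests to each of $v_2$ and $v_3$, plus padding on $v_6$, $v_8$, $v_{10}$. Then $\OPT(I_i)=m+i$, and achieving this forces \emph{exactly} $i$ of the $m$ colors at $v_4$ to differ from those at $v_1$. The $m+1$ values of $i$ give $m+1$ pairwise incompatible requirements on the one common prefix, which yields the $\CEIL{\log(m+1)}$ bound. To salvage your plan, replace the many binary decisions by this single integer-valued decision.
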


\begin{proof}
 We let $m = \FLOOR{\frac{n}{4}}$ and define a set $S$ of $m+1$ sequences, all having the same prefix of length $2m$. The set $S$ will have the following property: for no two sequences in $S$ can their prefixes be colored in the same way while ending up using the optimal number of colors on the complete sequence.
Starting from one end of the path, we denote the nodes $v_1,v_2,\dots$.

We define the set $S$ to consist of the sequences $I_0, I_1, \ldots,
 I_{m}$, where $I_i$ is defined in the following way.
First $m$ requests are given to each of the nodes $v_1$ and $v_4$.
Then $i$ requests to each of $v_2$ and $v_3$.
To give all sequences the same length, the sequence ends with
 $\CEIL{n-2m-2i}$ requests distributed as evenly as possible among
 $v_6$, $v_8$, and $v_{10}$.
Since $\CEIL{\CEIL{n-2m-2i}/3} \leq m$, the optimal number of colors will not be
 influenced by this part of the sequence.

Note that $\OPT(I_i)=m+i$. 
In order not to use more than $\OPT(I_i)$ colors for $I_i$, exactly $i$ of the colors assigned to $v_4$ have to be different from the colors assigned to $v_1$. The prefixes of length $2m$ in $S$ are identical, so all information to distinguish between the different sequences must be given as advice. The cardinality of $S$ is $m+1$.
To specify one out of $m+1$ possible actions,
$\CEIL{\log(m +1)}$ bits are necessary.
\end{proof}

For algorithms that are $\frac{9}{8}$-competitive or better, we give
the following lower bound.

\begin{theorem}\label{theorem:path-loweradvicebound}
 Consider multi-coloring paths of at least 10 nodes.
 For any $b \geq 3$ and any $(1+\frac{1}{2^b})$-competitive algorithm,
  \ALG, there exists an $N \in \nats$ such that \ALG 
  has advice complexity at least $b-2$ on sequences of length at least
  $N$.
\end{theorem}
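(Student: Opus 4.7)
My plan is to adapt the gadget of Theorem~\ref{theorem:lowerbound_path} to the asymptotic, approximate setting.  Fix $m = \FLOOR{n/4}$ and reuse the family $\{I_0, I_1, \ldots, I_m\}$ from that proof: a common prefix of $m$ requests on each of $v_1$ and $v_4$, then $i$ requests on each of $v_2$ and $v_3$, followed by filler on $v_6, v_8, v_{10}$, so that $\OPT(I_i) = m + i$ and the prefix coloring depends only on the advice string.

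Let $j \in \{0, 1, \ldots, m\}$ denote the number of colors $\ALG$ uses at $v_4$ that it does not use at $v_1$; since the prefix is fixed, $j$ is a function of the advice alone.  Because $v_2$ is adjacent only to $v_1$ and $v_3$ and $v_3$ is adjacent only to $v_2$ and $v_4$, node $v_2$ may reuse exactly the $j$ colors in $C_4 \setminus C_1$ and $v_3$ may reuse exactly the $j$ colors in $C_1 \setminus C_4$, with these reusable sets disjoint.  Processing $i$ requests on each of $v_2$ and $v_3$ therefore forces at least $2\max(0, i - j)$ fresh colors beyond the $m + j$ used on the prefix, so
\[ \ALG(I_i) \;\geq\; m + j + 2\max(0, i - j), \]
regardless of what the algorithm does on the filler.

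Writing the $(1+1/2^b)$-competitiveness as $\ALG(I_i) \leq (1 + 1/2^b)(m + i) + \alpha$, both cases $j \geq i$ and $j < i$ collapse to
\[ |j - i| \;\leq\; \frac{m + i}{2^b} + \alpha. \]
With $r$ bits of advice there are at most $K \leq 2^r$ distinct $j$-values $j_1, \ldots, j_K$, and every $i \in \{0, 1, \ldots, m\}$ must lie in some interval $[i_-(j_k), i_+(j_k)]$ obtained by solving the above inequality for $i$ with $j = j_k$.  A direct computation gives $i_+(j) - i_-(j) \leq 4m/2^b + 2\alpha + O(1)$ uniformly for $j \in [0, m]$, so the $K$ intervals together contain at most $K(4m/2^b + 2\alpha + O(1))$ integers from $\{0, 1, \ldots, m\}$.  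Requiring this to be $\geq m + 1$ forces $K > 2^{b-3}$ once $m$ exceeds a threshold depending only on $b$ and $\alpha$; then $K \geq 2^{b-3} + 1$, and since $b \geq 3$ we obtain $r \geq \CEIL{\log(2^{b-3} + 1)} = b - 2$.

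The main obstacle is threading the additive constant $\alpha$ cleanly through the counting step: $\alpha$ appears only additively in the good-interval inequality, so it cannot be made to vanish by increasing $m$ alone and instead must be absorbed into the sequence-length threshold $N$.  Once $m$ is at least roughly $2^{b-1}(\alpha + 1)$, the leading term $K \cdot 4m/2^b$ in the covered-count strictly dominates the constant tail $2^{b-3}(2\alpha + O(1))$ and the bound $K > 2^{b-3}$ goes through; everything else is bookkeeping that mirrors the proof of Theorem~\ref{theorem:lowerbound_path}.
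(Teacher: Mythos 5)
Your proof is correct, and it is built on exactly the paper's gadget: the family $I_0,\dots,I_m$ with $\OPT(I_i)=m+i$, the advice-determined quantity $j$ (the paper's $x_i$), and the identical lower bound $\ALG(I_i)\ge\max\{m+j,\;m+2i-j\}$, which combined with competitiveness gives the same pair of linear constraints. Where you genuinely diverge is the final counting step. The paper runs a \emph{packing} argument: it greedily constructs indices $i_1<i_2<\cdots<i_p$ via the recurrence $i_{k+1}=\lfloor (x_{i_k}+\eps m)/(1-\eps)+1\rfloor$, forcing pairwise distinct $x$-values, and then solves the recurrence to conclude $p\ge 2^{b-2}$. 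You instead run the dual \emph{covering} argument: each admissible $j$ is compatible only with an interval of $i$-values of bounded length, all $m+1$ values of $i$ must be covered, and dividing gives the required number of distinct $j$-values. Your route avoids solving a recurrence and threads the additive constant $\alpha$ through more transparently (it merely enlarges each interval by $O(\alpha)$ and is absorbed into the threshold $N$), at the price of being slightly lossier: you settle for $K>2^{b-3}$ and recover $b-2$ only through the ceiling, whereas the paper gets $2^{b-2}$ distinguishable behaviors directly. One small inaccuracy to note: the interval length is not $4m/2^b+2\alpha+O(1)$ but rather $\left(4m/2^b+2\alpha\right)/\left(1-4^{-b}\right)$, i.e., the correction is a multiplicative factor $1/(1-4^{-b})$ and not an additive $O(1)$ term; since $2^{b-2}(1-4^{-b})>2^{b-3}$ for all $b\ge 3$, this does not affect your conclusion, but the bookkeeping should be stated that way.
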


\begin{proof}
For any $(1+\frac{1}{2^b})$-competitive algorithm, \ALG,
there exists an $\alpha \geq 1$ such that $\ALG(I) \leq (1+\frac{1}{2^b})
 \OPT(I) + \alpha$, for any input sequence $I$.
We consider sequences of length $n \geq 2^{2b+2} \alpha+3$.

Let $m = \FLOOR{\frac{n}{4}}$ and consider the same set of sequences
 as in the proof of Theorem~\ref{theorem:lowerbound_path}. 
Recall that $$\OPT(I_i) = m+i\,.$$
For the sequence $I_i$, let $x_i$ denote the
 number of colors that \ALG uses on $v_4$ but not on $v_1$.
Then, \ALG uses $m + x_i$ colors in total for $v_1$ and $v_4$.
On $v_3$, it can use at most $x_i$ of the colors used at $v_1$, so the
 total number of colors used at $v_1$, $v_2$, and $v_3$ is at 
 least $m+2i-x_i$.
Thus, 
 $$\ALG(I_i) \geq \MAX{ m + x_i, m + 2i - x_i}\,.$$

We will prove that there are $p \geq 2^{b-2}$ sequences
 $I_{i_1}, I_{i_2}, \ldots,I_{i_p}$ such that, for any pair $i_j \neq
 i_k$, we have $x_{i_j} \neq x_{j_k}$, or otherwise \ALG would not be 
 $(1+\frac{1}{2^b})$-competitive on sequences of at least $2^{b+2}$ requests.
This will immediately imply that \ALG must use at least $b-2$ advice bits.

Let $\eps = \frac{1}{2^b} + \frac{1}{2^{2b}}$.
From 
 $\ALG(I_i) \leq (1+\frac{1}{2^b}) \OPT(I_i) + \alpha$ and $m \geq
 2^{2b} \alpha$,
 we obtain the inequalities 
 $$m + x_i \leq (1+\eps)(m+i)$$ 
 and
 $$m + 2i - x_i \leq (1+\eps)(m+i)$$
 which reduce to
 \begin{equation}\label{eq:biggestx}
  x_i \leq \eps m + (1+\eps)i
 \end{equation}
 and
 \begin{equation}\label{eq:biggesti}
 i \leq \frac{x_i + \eps m}{1-\eps}
 \end{equation}

Hence, by (\ref{eq:biggestx}), $x_0 \leq \eps m$.
Thus, by  (\ref{eq:biggesti}), we can have $x_i=x_0$,
 only if $i \leq \frac{2\eps m}{1-\eps}$.
Therefore,  we let $i_1 = 0$ and $i_2 = \lfloor \frac{2\eps
  m}{1-\eps}+1 \rfloor$.
In general, we ensure $x_{i_j} \neq x_{i_{j+1}}$ by letting $i_{j+1} =
\lfloor \frac{x_{i_j} + \eps m}{1-\eps}+1 \rfloor$.
Thus,
\begin{align*}
 i_{j+1}
 & \leq \frac{x_{i_j} + \eps m}{1-\eps}+1\\
 & \leq \frac{\eps m + (1+\eps)i_j + \eps m}{1-\eps} +1,
   \text{ by } (\ref{eq:biggestx})\\ 
 & = \frac{1+\eps}{1-\eps} \cdot i_j + \frac{2\eps m}{1-\eps} + 1
\end{align*}
Solving this recurrence relation, we get
\begin{align*}
 i_{j+1} 
 & \leq \left( \frac{1+\eps}{1-\eps} \right)^{j} \cdot i_1 + 
   \sum\limits_{k=0}^{j-1} 
     \left( \frac{1+\eps}{1-\eps} \right)^k 
     \left( \frac{2\eps m}{1-\eps} +1 \right)\\
 & = \left( \frac{1+\eps}{1-\eps} \right)^j \cdot 0 + 
     \frac{ \left( \frac{1+\eps}{1-\eps} \right)^j -1}
          { \frac{1+\eps}{1-\eps} -1}
     \left( \frac{2\eps m}{1-\eps} +1 \right)\\
 & = \frac{ \left( \frac{1+\eps}{1-\eps} \right)^j -1}
          {1+\eps-1+\eps}
     \left( 2\eps m + 1 - \eps \right)\\
 & = \frac{ \left( \frac{1+\eps}{1-\eps} \right)^j -1}
          {2\eps}
     \left( 2\eps m + 1 - \eps \right)
\end{align*}

We let $p$ equal the largest $j$ for which $i_{j} \leq m$:
\begin{align*}
 & m < i_{p+1} 
     \leq \frac{ \left( \frac{1+\eps}{1-\eps} \right)^p -1}
          {2\eps}
     \left( 2\eps m + 1 - \eps \right)\\
\Rightarrow ~ 
 & 2\eps m 
     < \left( \frac{1+\eps}{1-\eps} \right)^{p}
       \left( 2\eps m + 1 - \eps \right) -
       \left( 2\eps m + 1 - \eps \right)\\
\Leftrightarrow ~
& \frac{4m\eps + 1 - \eps}{2m\eps + 1 - \eps}
     < \left( \frac{1+\eps}{1-\eps} \right)^{p}\\
\Leftrightarrow ~
& \ln \left( 2 - \frac{1-\eps}{2m\eps + 1 - \eps} \right)
     < p \cdot \ln \left( 1 + \frac{2\eps}{1-\eps} \right)\\
\Rightarrow ~
& \ln \left( 2 - \frac{1-\eps}{2m\eps + 1 - \eps} \right)
     < p \cdot \frac{2\eps}{1-\eps}, 
       \text{ since } \ln(1+x) \leq x, \text{ for } x > -1
       \\
\Rightarrow ~
& \ln \left( 2 - \frac13 \right) 
     < p \cdot \frac{2\eps}{1-\eps},
       \text{ since } m\eps > 2^b > 1-\eps, 
       \\
\Rightarrow ~
& \ln \left( \sqrt{e} \right) 
     < p \cdot \frac{2\eps}{1-\eps}\\
\Leftrightarrow ~
& \frac12 
     < p \cdot \frac{2\eps}{1-\eps}\\
\Leftrightarrow ~
& p > \frac{1-\eps}{4\eps}\\
\Leftrightarrow ~ 
& p > \frac{1-\frac{1}{2^b}-\frac{1}{2^{2b}}}{\frac{4}{2^b}+\frac{4}{2^{2b}}} 
    = \frac{2^{2b}-2^b-1}{2^{b+2}+4} 
    > 2^{b-2}-1\\
\Rightarrow ~
& p \geq 2^{b-2},
    \text{ since } p \text{ is an integer}
\end{align*}
This completes the proof.
\end{proof}


For the following theorem, we define the class of {\em \local}
 algorithms:
A multi-coloring algorithm, \ALG, is called \local, if there exists a
 constant $d$ such that,
 when assigning a color to a request to a node $v$, \ALG
 bases its decision only on requests to nodes a distance of at most
 $d$ away from $v$.
Note that, in particular, a \local algorithm cannot base its decision
 on the current value of \OPT.

\begin{theorem}
\label{theorem:fourthirds}
 No \local online algorithm for multi-coloring paths with advice complexity $O(1)$ can be better than $\frac{4}{3}$-competitive.
\end{theorem}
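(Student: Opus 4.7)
The plan is to derive a contradiction from the tight deterministic $\frac{4}{3}$ lower bound for online multi-coloring paths~\cite{CCYZZ06p}, by exploiting that the \local restriction combined with $O(1)$ advice leaves only constantly many effective deterministic strategies, all of which can be defeated in parallel on a single long path.

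Assume, for contradiction, that \ALG is \local with some radius $d$, uses $c = O(1)$ advice bits, and is $(\frac{4}{3}-\eps)$-competitive for some $\eps > 0$. For each of the $2^c$ advice strings $a$, fixing the advice yields a deterministic online algorithm $\ALG_a$. By the lower bound of~\cite{CCYZZ06p}, for each $\ALG_a$ and every sufficiently large target $k$, there is a hard input $H_a$ on a constant-size sub-path with $\OPT(H_a) = k$ and $\ALG_a(H_a) \geq (\frac{4}{3} - \frac{\eps}{2}) k$. The uniform value $k$ is obtained by scaling request multiplicities in the \cite{CCYZZ06p} construction and taking $k$ large enough to absorb the additive slack into $\frac{\eps}{2}$.

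Now build a single adversarial input $\hat I$ on a long path by placing $H_1, \ldots, H_{2^c}$ on disjoint sub-paths separated by isolation zones of at least $2d+1$ unrequested nodes. By the \local assumption, whenever \ALG colors a request on the sub-path carrying $H_a$, its decision depends only on the requests in that sub-path (the $d$-ball around every request lies within the sub-path plus empty isolation nodes) and on the advice read. In particular, if the oracle supplies advice $a^*$, then the colors \ALG uses on the $H_{a^*}$ sub-path coincide with those produced by $\ALG_{a^*}$ on $H_{a^*}$ in isolation, so \ALG uses at least $(\frac{4}{3} - \frac{\eps}{2}) k$ distinct colors on that sub-path alone. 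Thus $\ALG(\hat I) \geq (\frac{4}{3} - \frac{\eps}{2}) k$, while $\OPT(\hat I) = k$, since disjoint, mutually non-adjacent sub-paths allow \OPT to share a common palette. The ratio exceeds $\frac{4}{3} - \eps$, contradicting competitiveness.

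The main obstacle is the uniform-$\OPT$ requirement: the lower bound of~\cite{CCYZZ06p} is asymptotic, so the hard instances for different $\ALG_a$'s may naturally have different optimal values. I expect this to be handled by the parametric nature of the \cite{CCYZZ06p} construction (scaling request multiplicities as a free parameter) combined with taking $k$ large enough that the constant additive slack in the asymptotic bound is absorbed. A secondary bookkeeping point is verifying that the isolation zones of width $2d+1$ genuinely decouple \ALG's decisions across sub-paths, which is immediate from the \local definition given just before the theorem statement.
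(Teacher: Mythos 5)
Your proposal is correct and follows essentially the same route as the paper's proof: fix each of the $O(1)$ advice strings to obtain constantly many deterministic algorithms, take a scaled hard instance for each from the known $\frac43$ lower-bound family so that all have the same \OPT, and concatenate them on disjoint sub-paths separated by enough unrequested nodes that the \local assumption decouples the algorithm's behavior on each piece. The only (immaterial) differences are the separation width ($2d+1$ versus the paper's $d$ unused nodes) and the citation used for the underlying deterministic lower bound.
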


\begin{proof}
 Having an online algorithm with advice complexity $O(1)$ gives an algorithm a constant number of possible algorithmic behaviors; it is equivalent to having $O(1)$ online algorithms without advice
and choosing one of these according to the given advice.

As shown in~\cite{CFL13}, the family of sequences used in the proofs
 of Theorems~\ref{theorem:lowerbound_path}
 and~\ref{theorem:path-loweradvicebound}, can be used to prove that any
 online algorithm without advice has a competitive ratio of at least
 $\frac43$. 
The result is asymptotic, since the construction works with any
 scaling of the number of requests to each node.
This means that for each algorithm, there exists an infinite family of
sequences indexed by $n$, the length of the sequences, establishing the
lower bound for each algorithm.

For any \local algorithm \ALG, there is a constant $d$ such that, when
 assigning a color to a request, \ALG ignores all requests given to
 nodes a distance of more than $d$ away from the requested node.
For any $n$ and any family, there is a smallest and a largest node on
the path which is requested, and the part of the path from this
smallest to the largest node defines a subpath.
We now rename nodes in these infinite families so that the
subpaths used by the different families are separated by $d$
unused nodes. We then form one request sequence by concatenating all these
renamed subsequences. We scale the number of requests in each sequence
such that the value of \OPT is the same for each sequence.

Clearly, no matter which of the $O(1)$ algorithms are run on this
constructed family, its performance tends to at least $\frac{4}{3}\OPT$.
\end{proof}

\subsection{Upper bounds}
\label{subsec:pathUpper}
For multi-coloring of a path, there exists a strictly $1$-competitive
$1$-recoloring algorithm, \GEO~\cite{CFL13}. \GEO divides the nodes into two sets, \UPPER and \LOWER, such
 that every second node belongs to \UPPER and the remaining nodes
 belong to \LOWER.
The following invariant is maintained:
 After each request,
 each node in \LOWER uses consecutive colors starting with the color 1 and
 each node in \UPPER uses consecutive colors ending with a color no
  larger than the optimal number of colors for the
  sequence of requests seen so far.

The algorithm for paths from~\cite{CFL13} is easily generalized to
work on bipartite graphs, letting the nodes of one partition, $L$, belong
to \LOWER and the nodes of the other partition, $U$, belong to \UPPER.
Recoloring is only needed if the number of colors used by an optimal offline algorithm is not known.
Hence, using $\ENCODE{\OPT}$ advice bits, an online algorithm can be
strictly $1$-competitive, even if recoloring is not allowed.
We call the resulting algorithm \GEOA.

To describe the algorithm \GEOA in detail, we need some notation:
Let $f_i(v)$ denote the set of colors assigned to node~$v$ after
the first $i$ requests, starting with request~1.
Also, for notational convenience,
we define $f_0(v)=\emptyset$ for all~$v$.
To smoothly handle initially empty sets of colors in the algorithm,
we define that if $f_i(v)$ is the empty set,
then $\min f_i(v) = \max f_i(v)= 0$.
This notation will be used throughout the appendix.
\GEOA is listed as Algorithm~\ref{algorithm-optimal-advice-bipartite}.

\begin{algorithm}[thb]
\algsetup{indent=2em}
\begin{algorithmic}[1]
\STATE Assume that a bipartite graph is given by the partition into $L$ and $U$.
\STATE {\bf Advice:} $m = \OPT$
\FOR{ $i = 1$ \TO $n$ }
\STATE Assume that the $i$th request, $r$, is to node~$v$
   \IF{$v \in$ $U$}
   \STATE /* using the upper colors top-down */
      \IF{$f_{i-1}(v) = \emptyset$}
         \STATE{give $r$ color $\maxcolor$}
      \ELSE
         \STATE give $r$ color $\min f_{i-1}(v) - 1 $
      \ENDIF
   \ELSE
      \STATE /* $v \in L$;  using the lower colors bottom-up */
      \STATE{give $r$ color $\max f_{i-1}(v) + 1$}
   \ENDIF
\ENDFOR
\end{algorithmic}
\caption{The multi-coloring algorithm \GEOA.}
\label{algorithm-optimal-advice-bipartite}
\end{algorithm}

\begin{theorem}\label{theorem:bipartite-upperbound-algorithm}
Algorithm \GEOA is correct, strictly $1$-competitive, and
has advice complexity $\ENCODE{\OPT}$.
\end{theorem}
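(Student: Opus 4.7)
The plan is to prove the three claims (correctness, strict $1$-competitiveness, advice complexity) by first establishing a structural invariant on the color sets $f_i(v)$, and then using the fact that each edge of the bipartite graph is a clique of size $2$ to bound request counts by $m = \OPT$.

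First I would prove, by induction on $i$, the invariant that for every node $v$, after the $i$th request is served:
\begin{itemize}
\item if $v \in L$, then $f_i(v) = \{1, 2, \ldots, |f_i(v)|\}$, and
\item if $v \in U$, then $f_i(v) = \{m - |f_i(v)| + 1, \ldots, m\}$.
\end{itemize}
The base case $i=0$ holds because $f_0(v) = \emptyset$ for every $v$. For the inductive step, a request to $v \in L$ is assigned $\max f_{i-1}(v)+1$, which by the inductive hypothesis extends a lower-consecutive block by one, and symmetrically for $v \in U$.

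Next I would establish correctness, i.e., that the color produced by \GEOA for the $i$th request $r$ at node $v$ differs from every color already assigned to $v$ or to any neighbor of $v$. The "differs from colors at $v$" part is immediate from the formulas $\max f_{i-1}(v)+1$ and $\min f_{i-1}(v)-1$. For the neighbor part, suppose $v \in L$; any neighbor $u$ of $v$ lies in $U$, and by the invariant its colors are exactly the top $|f_{i-1}(u)|$ colors of $\{1,\ldots,m\}$, while the new color at $v$ is $|f_{i-1}(v)|+1$. A conflict would require $|f_{i-1}(v)|+1 \geq m-|f_{i-1}(u)|+1$, i.e., $|f_{i-1}(v)| + |f_{i-1}(u)| + 1 \geq m+1$. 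But after serving $r$, the two adjacent nodes $v$ and $u$ have received a combined total of $|f_{i-1}(v)| + 1 + |f_{i-1}(u)|$ requests; since $\{u,v\}$ is a clique, any valid multi-coloring — in particular an optimal offline one — needs at least this many colors, so $m = \OPT \geq |f_{i-1}(v)| + |f_{i-1}(u)| + 1$, contradicting the assumed conflict. The case $v \in U$ is symmetric.

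Strict $1$-competitiveness then follows because the invariant guarantees every color used by \GEOA lies in $\{1,\ldots,m\}$, so $\GEOA(I) \leq m = \OPT(I)$. Finally, the advice complexity is immediate from the algorithm: the only advice consumed is the single nonnegative integer $m = \OPT$, read using the self-delimiting encoding from Section~1, which takes $\ENCODE{\OPT}$ bits. I expect the main (only) substantive step to be the clique argument that turns the bound $\OPT \geq |f_{i-1}(v)| + |f_{i-1}(u)| + 1$ into non-collision of the lower and upper color blocks; everything else is bookkeeping.
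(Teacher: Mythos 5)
Your proposal is correct and follows essentially the same route as the paper: the consecutive-block invariant (which the paper asserts as ``clearly''), the clique/edge argument that $\OPT$ bounds the combined request counts of two neighbors and hence keeps the lower and upper blocks disjoint, and the observations that all colors lie in $\{1,\ldots,\OPT\}$ and that the advice is just a self-delimiting encoding of $\OPT$. The only difference is that you make the induction explicit, which the paper leaves implicit.
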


\begin{proof}
We consider correctness first.
Clearly, at time $i$, the maximum color assigned to a node $v\in L$ is
$\max f_i(v) = |f_i(v)|$ and the minimum color assigned to a node
$v\in U$ is $\min f_i(v) = \OPT + 1 - |f_i(v)|$ (assuming $v$ has
received at least one request).

 Assume for the sake of contradiction that, at some time $i$, a request to a node $l\in L$ gets assigned the same color $c$ as a request to a neighboring node $u$,
which must belong to $U$.
This means that $c = |f_i(l)|$ and $c = \OPT + 1 - |f_i(u)|$, and, as $l$ and $u$ are neighbors, $\OPT \geq |f_i(l)| + |f_i(u)|$, but then $\OPT \geq |f_i(l)| + |f_i(u)| =  c + \OPT + 1 - c = \OPT + 1$. This is a contradiction, so \GEOA is correct.

It follows directly that the maximum color that \GEOA assigns is \OPT,
implying that \GEOA is strictly $1$-competitive.

The maximum color that an optimal offline algorithm uses, given a sequence of length $n$, is
$n$. Therefore, $\OPT \leq n $ and $\ENCODE{\OPT}$ advice bits are sufficient.
\end{proof}

We turn to nonoptimal variants of \GEOA using fewer than
 $\ENCODE{\OPT}$ advice bits.
We show how to obtain
 a particular competitive ratio
of $1+\frac{1}{2^{b}}$, using $b+1+O(\log\log\OPT)$ bits of advice.
Thus, essentially, we are approaching optimality exponentially
fast in the number of bits of advice.

\begin{theorem}\label{theorem:oneplus-competitivealgorithm} For any
  integer $b\geq 1$, there exists a strictly
  $(1+\frac{1}{2^{b-1}})$-competitive online algorithm for
  multi-coloring bipartite graphs with advice complexity
  $b + \ENCODE{a}$, where $a+b$ is the total number of bits in
  the value \OPT.
\end{theorem}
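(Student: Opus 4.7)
The plan is to reuse \GEOA but supply it with an overestimate $m$ of \OPT in place of the exact value, chosen so that $m$ can be reconstructed from $b+\ENCODE{a}$ advice bits and satisfies $m \leq \OPT\cdot(1 + 1/2^{b-1})$.

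First, I will specify the advice format: read $\ENCODE{a}$ bits in the self-delimiting encoding from Section~1 to recover $a$, then read exactly $b$ additional bits, interpreting them as an integer $t$ representing the top $b$ bits of \OPT (hence $t \geq 2^{b-1}$, since the most significant bit of \OPT is $1$). The algorithm then computes $m = (t+1)\cdot 2^a$. Because \OPT has exactly $a+b$ bits, whose top $b$ form $t$, we have $t\cdot 2^a \leq \OPT \leq (t+1)\cdot 2^a - 1$, so $\OPT < m \leq \OPT + 2^a$.

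The algorithm then proceeds exactly as \GEOA, with the value $\maxcolor = m$ used in place of the exact \OPT. Correctness will be a verbatim adaptation of the argument in Theorem~\ref{theorem:bipartite-upperbound-algorithm}: at any time $i$, nodes $l \in L$ use colors $1, \ldots, |f_i(l)|$ and nodes $u \in U$ use colors $m, m{-}1, \ldots, m{-}|f_i(u)|{+}1$, so a hypothetical color collision between neighbors $l \in L$ and $u \in U$ would force $|f_i(l)| + |f_i(u)| = m+1$. This contradicts the clique bound $|f_i(l)| + |f_i(u)| \leq \OPT < m+1$. For the competitive ratio, the algorithm uses at most $m$ colors, and since $\OPT \geq 2^{a+b-1}$ we get $m/\OPT \leq 1 + 2^a/\OPT \leq 1 + 2^a/2^{a+b-1} = 1 + 1/2^{b-1}$, which is the claimed strict competitive ratio.

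I do not anticipate a serious obstacle. The one technical subtlety to address is the boundary regime where \OPT has fewer than $b$ bits, so that the parameter $a$ is not well-defined; in that case one can simply signal this via a reserved encoding of $a$ and transmit \OPT exactly, which still fits within $b + \ENCODE{a}$ bits and yields strict $1$-competitiveness. Beyond that, correctness is inherited from the analysis of \GEOA, and the ratio is a direct arithmetic consequence of the two estimates $m \leq \OPT + 2^a$ and $\OPT \geq 2^{a+b-1}$.
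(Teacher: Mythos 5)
Your proposal is correct and follows essentially the same route as the paper: transmit the $b$ high-order bits of \OPT together with the encoded count $a$ of low-order bits, round up to an overestimate $m$ with $m\leq\OPT+2^a$, run \GEOA with that $m$, and conclude via $2^a\leq\OPT/2^{b-1}$. The only differences are cosmetic (you set $m=(t+1)2^a$ where the paper uses $t\cdot 2^a+2^a-1$, and you read the advice parts in the opposite order), and your handling of the case where \OPT has fewer than $b$ bits matches the paper's.
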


\begin{proof}
As advice, the algorithm asks for the $b$ high order bits of the value \OPT,
as well as the number $a = \CEIL{ \log(\OPT+1) } - b$ of 
low order bits, but not the value of these bits.
The algorithm knows $b$ and can therefore just read the first $b$ bits,
while $a$ needs to be encoded. Thus, $b + \ENCODE{a}$ bits are sufficient
to encode the advice.

First, if \OPT contains fewer than $b$ bits, this is detected by
$a$ being zero. In this case, some of the $b$ bits may be leading zeros.
By Theorem~\ref{theorem:bipartite-upperbound-algorithm},
we can then be strictly $1$-competitive.

Now, assume this is not the case. Let $\OPT_b = \FLOOR{ \frac{\OPT}{2^a}
}$ denote the
value represented by the $b$ high order bits.
Then the algorithm computes
$\maxcolor = 2^a \OPT_b + 2^a -1$ and runs
\GEOA with this $m$. 
Since $\OPT \leq m \leq \OPT + 2^a -1$, the algorithm is correct and
uses at most $\OPT + 2^{a} - 1$ colors.

For any number $x \geq 1$, consisting of $c$ bits, 
with the most significant bit being one,
$2^{c} \leq 2x$.
Thus, $2^{b+a} \leq 2\OPT$, so $2^{a} \leq \frac{2\OPT}{2^b}$.
This means that the number of colors used by \GEOA is less than
$\OPT + \frac{2\OPT}{2^b} = (1 + \frac{1}{2^{b-1}})\OPT$, so
the algorithm is strictly $(1 + \frac{1}{2^{b-1}})$-competitive.
\end{proof}

Considering the lower bound of Theorem~\ref{theorem:lowerbound_path}
versus the upper bound of
Theorem~\ref{theorem:bipartite-upperbound-algorithm}, as well as the
lower bound of Theorem~\ref{theorem:path-loweradvicebound}
versus the upper bound of Theorem~\ref{theorem:oneplus-competitivealgorithm},
in both cases there is a small discrepancy of a few bits,
in addition to a low order term.
The lower bound proof of Theorem~\ref{theorem:lowerbound_path}
demonstrates the need of advice to distinguish between
$\FLOOR{\frac{n}{4}}+1$ different scenarios to be optimal.
It will vary with $n$ whether or not the division by four saves
one or two bits compared with $\log n$, and similar reasoning applies
to Theorem~\ref{theorem:path-loweradvicebound}.
Thus, when stating the
lower bound, we have to subtract two bits
(refer to Table~\ref{table-results}).
Using encoding tricks, to for instance identify cases where \OPT
has a very small value, we can also sometimes get down to a bit less
than $\log n$ for the upper bound.
Thus, our results are nearly tight, up to low order terms,
but because of rounding,
it seems difficult to squeeze the missing few bits out of the bounds
in every case. Note that for upper bounds, one could
perform better by distinguishing between different cases, but
finding out which case to use requires extra bits, by which we lose the
advantage again.

\begin{corollary}
\label{theorem:oneplusepsilon}
For any $\varepsilon > 0$, there exists a strictly
$(1+\varepsilon)$-competitive deterministic online algorithm for
multi-coloring bipartite graphs with advice complexity
\[O(\log\log\OPT).\]
\end{corollary}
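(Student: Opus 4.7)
The plan is to obtain this as a direct consequence of Theorem~\ref{theorem:oneplus-competitivealgorithm} by choosing the parameter $b$ there as a constant depending only on $\varepsilon$.

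First, given $\varepsilon > 0$, I would pick the smallest integer $b \geq 1$ such that $\frac{1}{2^{b-1}} \leq \varepsilon$, that is, $b = \max\{1, \CEIL{\log(2/\varepsilon)}\}$. This $b$ depends only on $\varepsilon$ and is therefore a constant from the point of view of the input.

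Next, I would invoke Theorem~\ref{theorem:oneplus-competitivealgorithm} with this $b$. That theorem yields a strictly $(1+\tfrac{1}{2^{b-1}})$-competitive algorithm, which is in particular strictly $(1+\varepsilon)$-competitive by the choice of $b$. Its advice complexity is $b + \ENCODE{a}$, where $a$ is the number of low-order bits of $\OPT$, so $a \leq \CEIL{\log(\OPT+1)}$.

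Finally, I would bound the advice. Since $a = O(\log \OPT)$, the self-delimiting encoding described in the introduction gives $\ENCODE{a} = \CEIL{\log(a+1)} + O(\log \log(a+1)) = O(\log \log \OPT)$. Adding the constant $b$ leaves the total at $O(\log \log \OPT)$. There is no real obstacle here beyond verifying that $b$ is independent of the instance and that $\ENCODE{\cdot}$ on a logarithmic quantity yields a doubly logarithmic bound, so the corollary follows immediately.
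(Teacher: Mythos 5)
Your proposal is correct and follows essentially the same route as the paper: fix $b$ as a constant depending only on $\varepsilon$ so that $\frac{1}{2^{b-1}} \leq \varepsilon$, apply Theorem~\ref{theorem:oneplus-competitivealgorithm}, and observe that $\ENCODE{a}$ with $a = O(\log\OPT)$ contributes only $O(\log\log\OPT)$ bits. The explicit formula for $b$ and the explicit bound on $a$ are slightly more detailed than the paper's wording, but the argument is the same.
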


\begin{proof}
Except for the term $b$, the advice stated in
Theorem~\ref{theorem:oneplus-competitivealgorithm} is
$O(\log\log\OPT)$ and $\OPT\leq n$.
Thus, we just need to bound the term $b$.
For a given \eps, choose $b$ large enough such that $\frac{1}{2^{b-1}} \leq
 \varepsilon$.
Using this value for $b$ in Theorem~\ref{theorem:oneplus-competitivealgorithm},
 we obtain an algorithm with a strict competitive ratio of at most $1
 + \frac{1}{2^{b-1}} \leq 1 + \eps$.
%
%
Since, for any given \eps, $b$ is a constant, the total amount of
 advice is $O(\log\log\OPT)$.
\end{proof}


\subsection{Cancellations}
\begin{algorithm}[thb]
\algsetup{indent=2em}
\begin{algorithmic}[1]
\STATE Assume that a bipartite graph is given by the partition into $L$ and $U$.
\STATE {\bf Advice:} $m = \OPT$
\FOR{ $i = 1$ \TO $n$ }
\STATE Assume that the $i$th request, $r$, is to node~$v$
   \IF{$r$ is a color request}
      \IF{$v \in$ $U$}
      \STATE /* using the upper colors top-down */
         \IF{$f_{i-1}(v) = \emptyset$}
            \STATE{give $r$ color $\maxcolor$}
         \ELSE
            \STATE give $r$ color $\min f_{i-1}(v) - 1 $
         \ENDIF
      \ELSE
         \STATE /* $v \in L$; using the lower colors bottom-up */
         \STATE{give $r$ color $\max f_{i-1}(v) + 1 $}
      \ENDIF
   \ELSE
   \STATE /* $r$ is a cancellation */
      \IF{$v \in U$}
         \IF{the color of $r$ is different from $\min f_{i-1}(v)$}
             \STATE recolor the request that has color $\min f_{i-1}(v)$, giving it the color of $r$
         \ENDIF
      \ELSE
         \IF{the color of $r$ is different from $\max f_{i-1}(v)$}
             \STATE recolor the request that has color $\max f_{i-1}(v)$, giving it the color of $r$
         \ENDIF
      \ENDIF
   \ENDIF
\ENDFOR
\end{algorithmic}
\caption{The multi-coloring algorithm \GEOAC.}
\label{algo:advice-cancellation}
\end{algorithm}

%

The Multi-Coloring problem is sometimes considered
in the context of request cancellations,
i.e., a color already given to a node disappears again.
We observe that even using the
weakest form of recoloring, namely $0$-recoloring, where only requests
at the node where the cancellation takes place may be recolored, we
can extend the algorithm \GEOA, using the same advice, to a strictly $1$-competitive
algorithm. This is simply done by recoloring at most one request per cancellation to
ensure that the invariants regarding lower and upper nodes are maintained,
i.e., ensuring that the colors used at any node form a consecutive sequence
starting from one and increasing and starting from \OPT and decreasing for lower
and upper nodes, respectively.
This algorithm, \GEOAC, is listed as
 Algorithm~\ref{algo:advice-cancellation}.
Note that the difference to
 Algorithm~\ref{algorithm-optimal-advice-bipartite} is the check in
 line 5 as to whether the current request is a color request and the
 addition of lines 15--22 handling cancellations.

\section{Hexagonal Graphs}
A hexagonal graph is a graph that can be obtained by placing (at most)
 one node in each cell of a hexagonal grid (such as the one sketched in
 Figure~\ref{fig:advice-pathlong-inmain}) and adding an edge between
 any pair of nodes placed in neighboring cells.
Note that any hexagonal graph can be $3$-colored. This is easily seen,
 since it is possible to use the three colors cyclically on the cells
 of each row of the underlying hexagonal grid, such that no two neighboring cells
 receive the same color.

\subsection{Lower bounds}



\begin{theorem}\label{theorem:optimal-hex}
Any online algorithm for multi-coloring hexagonal graphs with a
strict competitive ratio strictly smaller than $\frac{3}{2}$ has advice
complexity at least $\FLOOR{\frac{n-1}{3}}$.
\end{theorem}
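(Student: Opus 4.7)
The plan is to use the standard advice-lower-bound technique: exhibit a family of $2^k$ sequences, $k=\lfloor (n-1)/3\rfloor$, on a single hexagonal graph $G$, all sharing a common prefix, such that the algorithm must produce a different coloring on the prefix for each sequence in the family. The key leverage is that for any request subsequence that is \emph{bipartite}, strict $c$-competitiveness with $c<3/2$ combined with the integrality of color counts forces $\ALG\leq 2c<3$, i.e., $\ALG=2$; the algorithm is pinned down to a proper 2-coloring (unique up to swapping the two colors), and nothing else.

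Concretely, I would choose a hexagonal graph $G$ containing $k+1$ pairwise disjoint ``edge gadgets'' $(u_i,v_i)$, plus, for each $i\in\{1,\dots,k\}$, two alternative ``bridge cells'' $w_i^0$ and $w_i^1$, where $w_i^0$ is adjacent in $G$ to $u_i$ and $u_{i+1}$ (and to nothing else among the gadget nodes), while $w_i^1$ is adjacent to $u_i$ and $v_{i+1}$. For each $s\in\{0,1\}^k$, define $I_s$ to have the common prefix $P$ consisting of one request to each of $u_1,v_1,\dots,u_{k+1},v_{k+1}$ (a total of $2(k+1)$ requests), followed by a suffix requesting $w_i^{s_i}$ for each $i$ (a total of $k$ requests). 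The total length is $n=3k+2$, and indeed $\lfloor (n-1)/3\rfloor=k$.

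With these choices, the subgraph induced by the requested nodes of $I_s$ is bipartite and connected (it is $k+1$ edges linked by the bridges), so $\OPT(I_s)=2$ and hence $\ALG(I_s)=2$. The algorithm must output a proper 2-coloring, determined up to a global color swap. Now observe that $w_i^0$ being adjacent to both $u_i$ and $u_{i+1}$ places $u_i,u_{i+1}$ on the same side of the bipartition, while $w_i^1$ being adjacent to $u_i,v_{i+1}$ forces $u_i$ and $v_{i+1}$ onto the same side, i.e., $u_i$ and $u_{i+1}$ onto opposite sides. Fixing the side of $u_1$, the bits $s_1,\dots,s_k$ therefore uniquely determine the side of every $u_i$ and $v_i$. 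Since the common prefix $P$ is the same for every $I_s$, the algorithm's output on $P$ depends only on the advice; and two sequences with distinct $s$ force distinct colorings on $P$. Hence the advice must take at least $2^k$ values, giving advice complexity at least $k=\lfloor (n-1)/3\rfloor$.

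The main obstacle is the geometric realization: a single hex cell $w_i^1$ adjacent to $u_i$ and $v_{i+1}$ must be placed so that it is \emph{not} simultaneously adjacent to $v_i$ or $u_{i+1}$, since either would create an odd cycle ($u_i v_i w_i^1$ or $u_{i+1} v_{i+1} w_i^1$), destroying bipartiteness and invalidating the $\OPT=2$ argument. Handling this requires a careful zigzag embedding in the hex grid and sufficient spacing (or spacer cells) between successive gadgets; if needed, $w_i^1$ can be replaced by a short odd-length path, at the cost of checking that the overall request count still yields the $\lfloor (n-1)/3\rfloor$ bound. Verifying that a workable embedding exists is where the geometric detail sits, but the advice-counting argument above is what produces the bound once such an embedding is in place.
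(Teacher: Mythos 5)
Your overall strategy is the paper's: a common prefix, $2^k$ suffixes that pairwise force incompatible $2$-colorings of that prefix via ``same color'' versus ``different color'' constraints between consecutive anchors, the observation that a strict ratio below $\frac32$ together with $\OPT=2$ pins the algorithm to exactly two colors, and a pigeonhole on the advice. The counting at the end is fine. The gap is exactly where you park it, in the geometric realization, and it is not a detail that ``sits'' somewhere to be checked later --- your gadget cannot be embedded in a hexagonal grid. In the triangular lattice (the adjacency graph of the cells), any two non-adjacent cells have at most two common neighbors, and when they have two, those two are themselves adjacent; hence every $4$-cycle has a chord. If $w_i^0\sim w_i^1$, your requirement that $w_i^0\,u_{i+1}\,v_{i+1}\,w_i^1$ be an induced $4$-cycle is therefore unsatisfiable. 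If $w_i^0\not\sim w_i^1$, they lie in opposite or $120^\circ$ positions around $u_i$, and a direct check shows that no neighbor of $w_i^0$ avoiding $u_i$ and $w_i^1$ is adjacent to any neighbor of $w_i^1$ avoiding $u_i$ and $w_i^0$; so the needed pair $u_{i+1}\sim v_{i+1}$ does not exist either. So the single-cell bridge $w_i^1$ is impossible, not merely delicate.

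Your proposed repair (replace $w_i^1$ by a two-cell odd path) is the right one, but it does not rescue the stated constant: with edge anchors costing $2$ requests each and a connector costing up to $2$ requests (the one-cell alternative must then be padded with a dummy request to keep all sequences the same length), each bit of advice costs $4$ requests, yielding only about $\lfloor n/4\rfloor$ rather than $\FLOOR{\frac{n-1}{3}}$. The paper's gadget avoids this by also shrinking the anchors to single nodes $O_0,O_1,\dots$ shared between consecutive gadgets: the ``different'' constraint is an odd path $O_{j-1}$--$D_{2j-1}$--$D_{2j}$--$O_j$ (two connector requests), the ``same'' constraint is a single cell $S_{2j-1}$ adjacent to both $O_{j-1}$ and $O_j$ plus one dummy request to a far-away cell, so every gadget costs exactly $3$ requests and the bound $\FLOOR{\frac{n-1}{3}}$ follows. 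To complete your proof you should adopt that leaner gadget (or prove the same $3$-requests-per-bit accounting some other way); the rest of your argument then goes through verbatim.
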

\begin{proof}
First, we explain a small part of the construction that we will use in many copies.
We consider two sequences with the same prefix of length
$2$.
Both sequences can be colored with two colors, but this requires
coloring the two prefixes of length two differently.
Consider the left-most part of Figure~\ref{fig:advice-pathlong-inmain}
 (surrounded by thick lines)
 consisting of the ``double'' nodes $D_1$ and $ D_2$, the ``outer''
 nodes $O_0$ and $O_1$ and the ``single'' nodes $ S_1$, and $ S_2$. 
These nodes form the same type of
configuration as the nodes $D_3$, $D_4$, $O_1$, $O_2$, $S_3$, and $S_4$.
If a pair of outer nodes
 are given some requests, they can later be ``connected'' 
 by follow-up requests to either
 the two double nodes or the single node between them.

First the nodes $O_0$ and $O_1$ get one request each. 
Then, either $D_1$ and $D_2$ or $S_1$ and $S_2$ receive one request
 each.
The node $S_2$ is used to get up to the same sequence length
in all cases.
In order not to use more than two colors,
the outer nodes have to use
different colors if we later give requests to the two $D$-nodes.
Similarly, the $O$-nodes should have the same color if we later
give a request to the $S$-node in between them.
Since the prefix of length two is
$\SEQ{O_0, O_1}$ for both sequences, all information for an algorithm
to distinguish between the two sequences must be given as advice.

We can repeat this graph pattern $\FLOOR{ \frac{n-1}{3} }$ times,
as illustrated in Figure~\ref{fig:advice-pathlong-inmain} with $k =
 \FLOOR{ \frac{n-1}{3} }$, 
giving the requests to all $O$-nodes first.
\begin{figure}[t]
\begin{center}
\resizebox{9.5cm}{!}{
\begin{tikzpicture} [thick, scale = 0.6, hexa/.style= {shape=regular polygon,regular polygon sides=6,minimum size=1.2cm, draw,inner sep=0,anchor=south,fill=white,rotate=30},  hl/.style={line width=2pt, line cap=round} ]
 \node[hexa] (0;0) at ({0.000000*sin(60)},{0.000000}) {}; 
 \node[hexa] (1;0) at ({2.000000*sin(60)},{0.000000}) {}; 
 \node[hexa] (2;0) at ({4.000000*sin(60)},{0.000000}) {}; 
 \node[hexa] (3;0) at ({6.000000*sin(60)},{0.000000}) {}; 
 \node[hexa] (4;0) at ({8.000000*sin(60)},{0.000000}) {}; 
 \node[hexa] (6;0) at ({12.000000*sin(60)},{0.000000}) {}; 
 \node[hexa] (7;0) at ({14.000000*sin(60)},{0.000000}) {}; 
 \node[hexa] (8;0) at ({16.000000*sin(60)},{0.000000}) {}; 
 \node[hexa] (9;0) at ({18.000000*sin(60)},{0.000000}) {}; 
 \node[hexa] (0;1) at ({1.000000*sin(60)},{1.500000}) {}; 
 \node[hexa] (1;1) at ({3.000000*sin(60)},{1.500000}) {}; 
 \node[hexa] (2;1) at ({5.000000*sin(60)},{1.500000}) {}; 
 \node[hexa] (3;1) at ({7.000000*sin(60)},{1.500000}) {}; 
 \node[hexa] (5;1) at ({11.000000*sin(60)},{1.500000}) {}; 
 \node[hexa] (6;1) at ({13.000000*sin(60)},{1.500000}) {}; 
 \node[hexa] (1;-1) at ({1.000000*sin(60)},{-1.500000}) {}; 
 \node[hexa] (2;-1) at ({3.000000*sin(60)},{-1.500000}) {}; 
 \node[hexa] (3;-1) at ({5.000000*sin(60)},{-1.500000}) {}; 
 \node[hexa] (4;-1) at ({7.000000*sin(60)},{-1.500000}) {}; 
 \node[hexa] (6;-1) at ({11.000000*sin(60)},{-1.500000}) {}; 
 \node[hexa] (7;-1) at ({13.000000*sin(60)},{-1.500000}) {}; 
 \node[hexa] (2;-2) at ({2.000000*sin(60)},{-3.000000}) {}; 
 \node[hexa] (6;-2) at ({10.000000*sin(60)},{-3.000000}) {}; 
 \node[hexa] (7;-2) at ({12.000000*sin(60)},{-3.000000}) {}; 
 \node[hexa] (2;2) at ({6.000000*sin(60)},{3.000000}) {}; 
 \node[circle,Black,minimum size=2cm] at (0;0) {$O_0$};
 \node[circle,Black,minimum size=2cm] at (1;0) {$S_1$};
 \node[circle,Black,minimum size=2cm] at (2;0) {$O_1$};
 \node[circle,Black,minimum size=2cm] at (3;0) {$S_3$};
 \node[circle,Black,minimum size=2cm] at (4;0) {$O_2$};
 \node[circle,Black,minimum size=2cm] at (6;0) {$S_{2k-1}$};
 \node[circle,Black,minimum size=2cm] at (7;0) {$O_{k}$};
 \node[circle,Black,minimum size=2cm] at (0;1) {$D_1$};
 \node[circle,Black,minimum size=2cm] at (1;1) {$D_2$};
 \node[circle,Black,minimum size=2cm] at (3;-1) {$D_3$};
 \node[circle,Black,minimum size=2cm] at (4;-1) {$D_4$};
 \node[circle,Black,minimum size=2cm] at (5;1) {$D_{2k-1}$};
 \node[circle,Black,minimum size=2cm] at (6;1) {$D_{2k}$};
 \node[circle,Black,minimum size=2cm] at (2;-2) {$S_2$};
 \node[circle,Black,minimum size=2cm] at (2;2) {$S_4$};
 \node[circle,Black,minimum size=2cm] at (7;-2) {$S_{2k}$};
 \node[circle,Black,minimum size=2cm] at (9;0) {$R$};
 \node[circle,Black,minimum size=2cm] at ({9.500000*sin(60)},{0.500000}) {$\dots$}; 
 \draw[hl]  (2;-2.corner 1) -- (2;-2.corner 2)  -- (2;-2.corner 3)  -- (2;-2.corner 4)  -- (2;-2.corner 5)  -- (2;-2.corner 6) -- (2;-2.corner 1);
 \draw[hl]  (0;1.corner 3) -- (0;1.corner 2)  -- (0;1.corner 1) -- (0;1.corner 6) -- (1;1.corner 1)  -- (1;1.corner 6) -- (1;1.corner 5) -- (2;0.corner 6) -- (2;0.corner 5) -- (2;0.corner 4) -- (2;0.corner 3) -- (1;0.corner 4) -- (1;0.corner 3) -- (0;0.corner 4) -- (0;0.corner 3)-- (0;0.corner 2) -- (0;0.corner 1);
\end{tikzpicture}}
\end{center}
\caption{Hexagonal lower bound construction.}
\label{fig:advice-pathlong-inmain}
\end{figure}
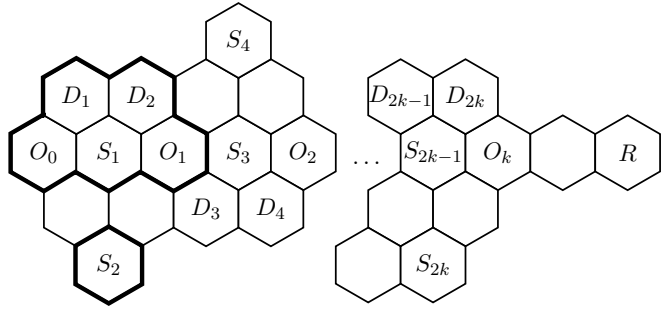

We now define the set of sequences $S$ of cardinality
$ 2^{\FLOOR{ \frac{n-1}{3} }}$ formally, i.e., we define a sequence for each 
possible combination of requests to either $D_{2j-1}$ and $D_{2j}$ or $S_{2j-1}$ and $S_{2j}$ for 
$j=1,2,\dots,\FLOOR{ \frac{n-1}{3} }$ .
A sequence is defined for any chosen combination of the following
$i$-values, i.e., by choosing a tuple
$(i_1,i_2,\dots,i_{\FLOOR{ \frac{n-1}{3} }}) \in \SET{0,1}^{\FLOOR{ \frac{n-1}{3} }}$.
For any such choice, we define the sequence as a concatenation of the
subsequences given below.
In the description of the subsequences, we use the notation $\REQ{v}{m}$ to denote a sequence of $m$ requests to a node $v$,
and also use this notation for $m=0$, denoting the empty request sequence, and $m=1$, denoting
one request.

\begin{itemize}
 \item $\REQ{O_{j}}{1}$ for $j = 0,1,\dots,\FLOOR{ \frac{n-1}{3} }$
 \item $\REQ{D_{2j-1}}{i_j}$, $\REQ{D_{2j}}{i_j}$ for $j = 1,2,\dots,\FLOOR{ \frac{n-1}{3} }$
 \item $\REQ{S_{2j-1}}{1-i_j}$, $\REQ{S_{2j}}{1-i_j}$ for $j = 1,2,\dots,\FLOOR{ \frac{n-1}{3} }$
 \item $\REQ{R}{n - (3\FLOOR{\frac{n-1}{3}}+1)}$
\end{itemize}
Note that for any $i_j$, $\SET{i_j, 1-i_j}=\SET{0,1}$ and we either give requests to the $D$-nodes or the $S$-nodes.
The possible requests to $R$ simply gets all sequences up
to a length of $n$.

The node $O_0$ is given some color. After that, we have
$\FLOOR{\frac{n-1}{3}}$ independent choices of coloring each node
$O_i$ in the prefix of any sequence identically to $O_{i-1}$ or not.
Since the prefixes are the same,
all information for an algorithm to distinguish between the different
sequences must be given as advice.
To specify one out of $2^{\FLOOR{\frac{n-1}{3}}}$ possible actions,
$\CEIL{ \log 2^{ \FLOOR{ \frac{n-1}{3} } } } = \FLOOR{\frac{n-1}{3}}$
bits are necessary.
\end{proof}

\begin{theorem}
\label{theorem:hexfivefour}
Any online algorithm for multi-coloring hexagonal graphs with
competitive ratio strictly smaller than $\frac{5}{4}$ has advice complexity
$\Omega(n)$.
\end{theorem}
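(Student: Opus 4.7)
The plan is to follow the same blueprint as the proof of Theorem~\ref{theorem:optimal-hex}, but combine it with a Chernoff / union-bound argument over advice strings in order to reach the asymptotic threshold $\tfrac{5}{4}$. Keep the same family of $2^m$ sequences with $m = \Theta(n)$, indexed by bit-strings $s \in \{0,1\}^m$ whose $j$-th bit determines which follow-up (D-version or S-version) is used at gadget $j$. All $2^m$ sequences share the same prefix of requests to the $O$-nodes, so any algorithm reading $b$ bits of advice commits on that prefix to at most $2^b$ distinct colorings $c_0,c_1,\dots,c_m$ of the $O$-nodes before seeing which follow-up is used.

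The first step is a standard counting argument. Fix a committed coloring and call gadget $j$ \emph{mismatched} if $s_j$ is $D$ but $c_{j-1}=c_j$, or if $s_j$ is $S$ but $c_{j-1}\neq c_j$. For uniformly random $s\in\{0,1\}^m$ the number of mismatched gadgets is distributed as $\mathrm{Bin}(m,1/2)$, so by a Chernoff bound the probability of fewer than $m/4$ mismatches is at most $2^{-\gamma m}$ for some constant $\gamma>0$. A union bound over the $2^b$ committed colorings then shows that whenever $b\leq \tfrac{\gamma}{2}m$ there is some sequence $I_s$ for which \emph{every} committed coloring incurs at least $m/4$ mismatches.

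The main obstacle is to turn $\Omega(m)$ mismatches into a violation of the asymptotic $\tfrac{5}{4}$-competitive definition. Each mismatched gadget forces the algorithm to introduce at least one color beyond what the optimal $2$-coloring of the $O$-chain uses locally. The proposed strategy is to re-embed the gadget chain in the hex grid so that the ``extra-color'' node of each gadget lies in a cell adjacent to the extra-color node of its neighboring gadgets in the chain; a constant number of fresh cells per gadget is enough, and the hex grid accommodates this because any node in the grid has many available neighbors. Under such an embedding the extra colors from different mismatched gadgets are pairwise forced to be distinct and therefore accumulate: $\mathrm{ALG}(I_s)\geq \mathrm{OPT}(I_s) + \Omega(m) = \Omega(n)$. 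At the same time $\mathrm{OPT}(I_s)$ remains bounded by a small constant, because the global $2$-coloring of the $O$-chain extends to a valid coloring of the whole modified construction for every $s$ (the offline algorithm does not need to use any of the ``extra'' colors that the online algorithm is forced into).

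Consequently $\mathrm{ALG}(I_s) - c\cdot\mathrm{OPT}(I_s)$ is unbounded in $n$ for every constant $c$, so no constant additive term $\alpha$ can make the algorithm asymptotically $c$-competitive, in particular not $(\tfrac{5}{4}-\varepsilon)$-competitive. Hence $b=\Omega(m)=\Omega(n)$, which is the claimed bound. I expect the geometric re-embedding---showing that the adjacencies needed to force the extras to be pairwise distinct can be realized inside a hex grid without also inflating $\mathrm{OPT}(I_s)$---to be the main technical difficulty.
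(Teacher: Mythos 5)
There is a genuine gap, and it sits exactly where you predicted: the step that converts $\Omega(m)$ mismatches into a violation of $(\frac54-\eps)$-competitiveness does not work, and in fact cannot work as stated. With $O(1)$ requests per gadget, the gadgets are far apart in the graph, so \ALG may freely reuse colors across them; the total number of colors \ALG uses is governed by the maximum over gadgets, not the sum, so $\Omega(m)$ mismatched gadgets still cost only $O(1)$ colors beyond \OPT. Your proposed re-embedding cannot repair this: forcing the ``extra-color'' nodes to be \emph{pairwise} differently colored would require them to form a clique of size $\Omega(m)$, but hexagonal graphs have clique number $3$; a chain of adjacencies only forces consecutive extras to differ, and two colors then suffice for all of them. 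More decisively, your intended conclusion $\ALG(I_s)\geq \OPT(I_s)+\Omega(n)$ with $\OPT(I_s)=O(1)$ would show that every algorithm with $o(n)$ advice has unbounded competitive ratio, contradicting the existence of constant-competitive hexagonal multi-coloring algorithms that use no advice at all (e.g., \HYBRID, or the strictly $2$-competitive algorithm cited in the paper). So no correct argument can reach that inequality.

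The paper's proof avoids this by tuning the gadget size against the assumed advice length rather than keeping it constant. Supposing $g(n)\in o(n)$ advice bits suffice, it gives each gadget $p=f(n)=\frac12\frac{n}{g(n)}\in\omega(1)$ requests ($\frac{p}{4}$ to each relevant node). Then $\OPT=\frac{p}{2}$ for the whole sequence, and if $q$ denotes the number of colors shared between the two $O$-nodes of a gadget, the two continuations force at least $\frac{3p}{4}-q$ and $\frac{p}{2}+q$ colors respectively; adding the two competitiveness constraints gives $\eps p\leq 2\alpha$, so for a \emph{single} gadget decided with the wrong $q$ the ratio already fails to beat $\frac54$ asymptotically, because $p\to\infty$ absorbs the additive constant $\alpha$. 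Hence the algorithm must decide every one of the $\geq n/f(n)=2g(n)$ gadgets correctly, which requires more than $g(n)$ bits, a contradiction. Your Chernoff/union-bound machinery, which works hard to guarantee \emph{many} mismatches, is unnecessary once one mismatch suffices; a pigeonhole over the $2^b$ advice-determined behaviors is enough. The idea you are missing is precisely this scaling of per-gadget request counts as a function of the hypothesized advice bound.
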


\begin{proof}
We use the basic construction from Theorem~\ref{theorem:optimal-hex}.
Assume $p$ requests are given to one of the components like this:

First, we give $\frac{p}{4}$ requests to each of $O_0$ and $O_1$.
Let $q$, $0 \leq q \leq \frac{p}{4}$, denote the number of colors used at both nodes. Then following up by
giving $\frac{p}{4}$ requests to each $S$-node results in a minimum
of $\frac{3p}{4}-q$ colors used, while giving the requests to the
$D$-nodes instead results in a minimum of $\frac{p}{2}+q$ colors.

Note that $\OPT = \frac{p}{2}$, independent of in which of the two ways the
sequence is continued.
Thus, for any $\eps > 0$, any $(\frac54-\eps)$-competitive algorithm
 must choose $q$ such that, for some constant $\alpha$, $\frac{3p}{4}-q
 \leq \left(\frac54-\eps\right)\frac{p}{2} +\alpha$ and $\frac{p}{2} +q
 \leq \left(\frac54-\eps\right)\frac{p}{2} +\alpha$.
Adding these two inequalities, we obtain $\frac{5p}{4} \leq
 (\frac54-\eps)p+2\alpha$ which is equivalent to $\eps p \leq 2\alpha$.
Thus, if $p$ is non-constant, no $(\frac54-\eps)$-competitive
 algorithm can use the same value of $q$ for both sequences.

Now assume for the sake of contradiction that for some advice of $g(n)
 \in o(n)$ bits, we can obtain a ratio of $\frac54-\eps$.
Let $f(n) = \frac{1}{2} \frac{n}{g(n)}$.
Since $g(n) \in o(n)$, $f(n) \in \omega(1)$.
The idea is now to repeat the construction as in
 the proof of Theorem~\ref{theorem:optimal-hex} and give $f(n)$
 requests to each construction ($f(n)$ has the role of $p$ in the
 above).
Since a pair of neighboring constructions share $f(n)/4$ requests,
this results in $\frac{n-f(n)/4}{3f(n)/4} = \frac{4n-f(n)}{3f(n)} \geq
\frac{n}{f(n)}$ 
constructions. We assume 
without loss of generality 
that all our divisions result in integers.

In order to be $(\frac{5}{4}-\eps)$-competitive, an online
algorithm must, for each two neighboring $O$-nodes, choose between at
 least two different values of $q$.
These are independent decisions, and the ratio only ends up strictly better
than $\frac{5}{4}$ if the algorithm decides correctly in every subconstruction.
Thus, it needs at least $\frac{n}{f(n)}$ bits of advice.
However, $\frac{n}{f(n)}=\frac{n}{\frac{1}{2} \frac{n}{g(n)}}=2g(n)>g(n)$,
which is a contradiction.
%
\end{proof}

\subsection{Upper bounds}

We have the following trivial upper bound on the advice necessary to
be optimal, independent of the graph topology:

\begin{theorem}\label{theorem:optimal-hex-upper}
 There is a strictly $1$-competitive online multi-coloring algorithm 
 with advice complexity
 $(n+1)\CEIL{\log \OPT}$.
\end{theorem}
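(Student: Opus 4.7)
The plan is to use the advice tape to transmit an optimal offline coloring directly to the online algorithm, one color per request. Since the offline oracle has unbounded computational power and full knowledge of the request sequence, it can precompute some optimal offline multi-coloring using color set $\{1, 2, \ldots, \OPT\}$. The algorithm merely replays this coloring as requests arrive.

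More concretely, I would describe the following algorithm. The oracle writes $\OPT$ on the tape first, in $\lceil \log \OPT \rceil$ bits (the ``$+1$'' term in $(n+1)\lceil\log\OPT\rceil$ absorbs this prefix, including any small lower-order terms needed to make the encoding self-delimiting, since $\ENCODE{\OPT} = \lceil\log(\OPT+1)\rceil + O(\log\log\OPT)$). After this prefix, the oracle writes, for each request $i = 1, 2, \ldots, n$ in the order they arrive, the color assigned to the $i$th request by the fixed optimal offline coloring, using exactly $\lceil \log \OPT \rceil$ bits. The online algorithm begins by reading the value $\OPT$, and then, upon receiving the $i$th request, reads the next $\lceil \log \OPT \rceil$ bits from the tape and uses the integer they encode as the color of that request.

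Correctness and strict $1$-competitiveness then follow immediately: because the bits written for the $i$th request come from a valid optimal coloring, the color assigned by the algorithm never conflicts with a color already used at the same node or a neighboring node, and no color exceeds $\OPT$. Summing, the algorithm reads at most $\lceil \log \OPT \rceil + n \lceil \log \OPT \rceil = (n+1) \lceil \log \OPT \rceil$ bits of advice.

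There is no real obstacle here; the theorem is a trivial upper bound, and the only mild subtlety is ensuring the algorithm knows how many bits to read for $\OPT$ before it can start parsing per-request blocks. This is handled by the standard self-delimiting encoding described in the preliminaries, whose $O(\log\log\OPT)$ overhead fits comfortably into the single extra block of $\lceil \log \OPT \rceil$ bits contributed by the ``$+1$'' in the advice bound.
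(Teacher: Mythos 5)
Your approach is exactly the paper's: transmit a fixed optimal offline coloring one color per request, each in a block of $\CEIL{\log \OPT}$ bits, preceded by a self-delimiting prefix so the algorithm knows the block length. The one slip is in the accounting of that prefix: you encode the value \OPT itself, costing $\ENCODE{\OPT} = \CEIL{\log(\OPT+1)} + \Theta(\log\log\OPT)$ bits, which does \emph{not} fit inside the single extra block of $\CEIL{\log\OPT}$ bits provided by the ``$+1$'', so your total is $(n+1)\CEIL{\log\OPT} + \Theta(\log\log\OPT)$ rather than the claimed bound. The algorithm never needs \OPT itself, only the per-request block length; the paper therefore puts $\CEIL{\log\OPT}$ (self-delimited, costing only $\ENCODE{\CEIL{\log\OPT}} = O(\log\log\OPT)$ bits) in the prefix, which does fit under the extra block. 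Replacing your prefix accordingly makes the argument match the stated bound.
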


\begin{proof}
Start by asking for the number of bits necessary to represent values
up to \OPT.
Then for each request, read $\CEIL{\log(\OPT+1)}$ bits telling, which color
to use.
This gives
$\ENCODE{\CEIL{\log \OPT}}+n\CEIL{\log \OPT} <
 (n+1)\CEIL{\log \OPT}$.
\end{proof}

In the following, we will show how two known approximation
 algorithms can be converted to online algorithms with advice.
In the description of the algorithms, we let the {\em weight} of a
 clique denote the total number of requests to the nodes of the
 clique.
Note that the only maximal cliques in a hexagonal graph are isolated
nodes, edges, or triangles.
We let $\omega$ denote the maximum weight of any clique in the
 graph.\footnote{The Greek letter $\omega$ is traditionally used here,
 so we will also do that. 
Since there is no argument, this should not give rise to confusion with
 the $\omega(f)$, stemming from asymptotic notation.} 

A $\frac32$-competitive algorithm called the Fixed Preference Allocation algorithm, \FPA,
 was proposed in~\cite{JKM99}. 
In~\cite{N02}, the strategy was simplified and it was noted that the
 algorithm can be converted to a $1$-recoloring online algorithm.
We describe the simplified offline algorithm below.

The algorithm uses three color classes, \classred, \classgreen, and
 \classblue.
The color classes represent a partitioning of the nodes in the graph
so that no two neighbors are in the same partition.
Each of the three color classes has its own set of
$\CEIL{\frac{\omega}{2}}$ colors, and each node in a given
color class uses the colors of its color class, starting with the smallest.
This set of colors is also referred to as the node's {\em private} colors.
If more than $\CEIL{\frac{\omega}{2}}$ requests are given to a node,
then it borrows colors from the private colors of one of its neighbors,
taking the highest available color.
\classred nodes can borrow colors from \classgreen nodes, \classgreen
 from \classblue, and \classblue from \classred.

For completeness, we give the arguments 
 that \FPA is correct and obtains an approximation ratio of~$\frac32$.
Assume for the purpose of contradiction that the coloring produced by
the algorithm causes a conflict between an \classred node and a \classgreen 
node. This means that their combined number of requests must be
greater than $\omega$, which is a contradiction. The same argument
holds for the other color combinations. Thus, the coloring is legal. 
Any optimal algorithm needs at least $\omega$ colors, so $\OPT \geq
\omega$ and the algorithm is a $\frac{3}{2}$-approximation algorithm. 

\begin{algorithm}[thb]
\algsetup{indent=2em}
\begin{algorithmic}[1]
 \STATE {\bf Advice:} $\CEIL{\frac{\omega}{2}}$
 \STATE $\RED = \SET{ 1,2,\dots,\CEIL{\frac{\omega}{2}}}$, 
 \STATE $\GREEN = \SET{ \CEIL{\frac{\omega}{2}}+1,\CEIL{\frac{\omega}{2}}+2,\dots,2\CEIL{ \frac{\omega}{2} }}$, 
 \STATE $\BLUE = \SET{ 2\CEIL{\frac{\omega}{2}}+1,2\CEIL{\frac{\omega}{2}}+2,\dots,3\CEIL{\frac{\omega}{2}}}$
 \STATE {\bf Function} $\class(v)$
 \STATE \qquad {\bf return} $v$'s color class: \classred, \classgreen, or \classblue
 \STATE {\bf Function} $\borrow(c)$
 \STATE \qquad {\bf return} the next class in the wrap-around sequence \classred, \classgreen, or \classblue
 \STATE {\bf Function} $\colors(c)$
 \STATE \qquad {\bf return} the set of private colors of class~$c$
 \FOR{ $i = 1$ \TO $n$ }
\STATE Assume that the $i$th request, $r$, is to node~$v$
  \IF{$\SIZE{ f_{i-1}(v)} < \CEIL{\frac{\omega}{2}}$ } 
     \STATE give $r$ color $\min(\colors(\class(v))\setminus f_{i-1}(v))$
  \ELSE 
     \STATE give $r$ color $\max(\colors(\borrow(\class(v)))\setminus f_{i-1}(v))$
  \ENDIF
\ENDFOR
\end{algorithmic}
\caption{The $\frac{3}{2}$-competitive algorithm, \FPA, with advice.}
\label{algo:borrowing-advice}
\end{algorithm}

Since $\CEIL{\frac{\omega}{2}} \leq \CEIL{\frac{\OPT}{2}}$, we can
 give $\CEIL{\frac{\omega}{2}}$ as advice, resulting in
 Algorithm~\ref{algo:borrowing-advice}.
Note that the $f$-notation used in the pseudo-code
was defined in connection with
Algorithm~\ref{algorithm-optimal-advice-bipartite}.

 \begin{theorem}
\label{theorem:approx32}
 There is a $\frac{3}{2}$-competitive online algorithm for multi-coloring hexagonal graphs with advice complexity $\ENCODE{\CEIL{\frac{\OPT}{2}}}$.
\end{theorem}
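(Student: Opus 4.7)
The plan is to piggy-back on the correctness and $\frac{3}{2}$-approximation analysis of the offline \FPA already sketched in the text, verifying that the only global information needed for \FPA to run online is the single value $\CEIL{\frac{\omega}{2}}$, and then translating the natural advice length $\ENCODE{\CEIL{\frac{\omega}{2}}}$ into the stated bound $\ENCODE{\CEIL{\frac{\OPT}{2}}}$.

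First I would show that Algorithm~\ref{algo:borrowing-advice} is indeed online. Given $\CEIL{\frac{\omega}{2}}$ (read self-delimitingly from the tape), the three private palettes \RED, \GREEN, \BLUE are fully determined. The partition of the nodes into the color classes \classred, \classgreen, \classblue is obtained from the cyclic $3$-coloring of the rows of the hexagonal grid, which depends only on the (known) underlying graph and can be computed without any knowledge of future requests. Hence when the $i$th request to node~$v$ arrives, the algorithm can decide whether $|f_{i-1}(v)| < \CEIL{\frac{\omega}{2}}$ and either take the smallest free color from $\colors(\class(v))$ or the largest free color from $\colors(\borrow(\class(v)))$, matching the simplified offline \FPA of~\cite{N02} request-for-request.

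Second I would invoke the correctness and approximation argument already given in the paper just before the algorithm. Correctness: if a conflict arose between an \classred node and an adjacent \classgreen node, their combined request count would exceed $\omega$, contradicting the definition of $\omega$; the same reasoning handles the \classgreen/\classblue and \classblue/\classred pairs. Competitiveness: the algorithm uses at most $3\CEIL{\frac{\omega}{2}}$ colors, and since every clique of weight $\omega$ needs $\omega$ distinct colors we have $\omega \leq \OPT$, so
\[
\ALG \;\leq\; 3\CEIL{\tfrac{\omega}{2}} \;\leq\; \tfrac{3(\omega+1)}{2} \;\leq\; \tfrac{3}{2}\OPT + \tfrac{3}{2},
\]
which is $\frac{3}{2}$-competitive in the asymptotic sense defined in Section~1.

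Third, for the advice complexity I would observe that only the number $\CEIL{\frac{\omega}{2}}$ is written on the tape, using a self-delimiting encoding. Because $\omega \leq \OPT$ implies $\CEIL{\frac{\omega}{2}} \leq \CEIL{\frac{\OPT}{2}}$ and the encoding length $\ENCODE{\cdot}$ from Section~1 is monotone non-decreasing in its argument, the total advice length is at most $\ENCODE{\CEIL{\frac{\OPT}{2}}}$, as claimed. There is no serious obstacle here: the whole content of the theorem is that once the single global threshold $\CEIL{\frac{\omega}{2}}$ is supplied, \FPA becomes executable online, and everything else is an immediate consequence of the offline analysis.
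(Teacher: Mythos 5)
Your proposal is correct and follows essentially the same route as the paper: the paper's own proof is just the one-line observation that, given $\CEIL{\frac{\omega}{2}} \leq \CEIL{\frac{\OPT}{2}}$ as advice, \FPA (Algorithm~\ref{algo:borrowing-advice}) runs online, with correctness and the $\frac{3}{2}$ ratio inherited from the offline analysis presented just before the theorem. You merely make explicit a few steps the paper leaves implicit, such as the additive constant $\frac{3}{2}$ explaining why the ratio is asymptotic rather than strict, and the monotonicity of $\ENCODE{\cdot}$.
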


\begin{proof}
 Given $\CEIL{\frac{\omega}{2}} \leq \CEIL{\frac{\OPT}{2}}$ as advice,
 \FPA can be used as an online algorithm
 (Algorithm~\ref{algo:borrowing-advice}).
\end{proof}

In~\cite{McDR00}, an algorithm with an improved approximation ratio of
$\frac{4}{3}$ was introduced.
We now describe this algorithm.
For completeness, we also give the arguments 
 that the algorithm is correct and is a $\frac43$-approximation algorithm:

The algorithm uses color classes in the same way as \FPA, except that the
 private color sets contain only $\FLOOR{\frac{\omega +1}{3}}$
 colors each.
We use the following notation.
For any node $v$, we let $n_v$ denote the number of requests to $v$.
Furthermore, $b_v$ denotes the maximum number of colors that
 $v$ can borrow, i.e., $b_v = \max \{ 0, \FLOOR{\frac{\omega +1}{3}}
-n'_v \}$,
 where $n'_v$ is the maximum number of requests to any of the neighboring nodes
in the color class that $v$ can borrow from.

The algorithm can be seen as working in up to three phases:

In the {\em first phase}, the algorithm colors $\min\{n_v,\FLOOR{\frac{\omega
 +1}{3}}\}$ requests to each node, $v$, using the node's
 private colors.
Let $G_1$ be the graph induced by the nodes that still have
 uncolored requests after Phase~1.

For any node, $v$, in $G_1$, $\FLOOR{\frac{\omega +1}{3}}$ requests to
 $v$ are colored with $v$'s private colors in Phase~1.
By the definition of $\omega$, this immediately implies that any pair
 of neighboring nodes have a total of at most $\omega - 2\FLOOR{\frac{\omega
 +1}{3}}$ uncolored requests already after Phase~1.

In the {\em second phase}, each node $v$ with more than
 $\FLOOR{\frac{\omega +1}{3}}$ requests borrows
 $\min\{n_v-\FLOOR{\frac{\omega +1}{3}}, b_v\}$ colors.
%
Let $G_2$ be the graph induced by nodes that still have uncolored
 requests after Phase~2.

In~\cite{McDR00} it is proven that $G_2$ is bipartite and that any pair of
 neighbors in $G_2$ has a total of at most $\omega -
 2\FLOOR{\frac{\omega +1}{3}} \leq \FLOOR{\frac{\omega +1}{3}}+1$
 uncolored requests after Phase~2.
Thus, in the third phase, the remaining requests can be colored with \GEO
(see the path section)
 using $\FLOOR{\frac{\omega +1}{3}}+1$ additional colors. 

To see that $G_2$ is bipartite, first note that $G_1$ (and hence
 $G_2$) cannot contain triangles.
Each node in such a triangle would have received at least
 $\FLOOR{\frac{\omega+1}{3}}+1$ requests, contradicting the definition
 of $\omega$.

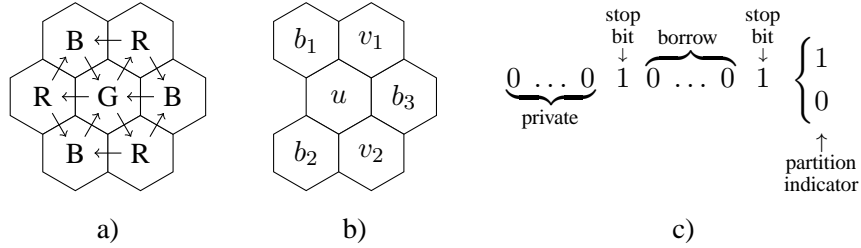
\begin{figure}
\begin{center}
\begin{tabular}{c@{\hspace{2em}}c@{\hspace{2em}}c}
\begin{tikzpicture} [hexa/.style= {shape=regular polygon,regular polygon sides=6,minimum size=1cm, draw,inner sep=0,anchor=south,fill=white,rotate=30}]
 \node[hexa] (-1;1) at ({-0.500000*sin(60)},{0.750000}) {}; 
 \node[hexa] (-1;2) at ({0.000000*sin(60)},{1.500000}) {}; 
 \node[hexa] (0;0) at ({0.000000*sin(60)},{0.000000}) {}; 
 \node[hexa] (0;1) at ({0.500000*sin(60)},{0.750000}) {}; 
 \node[hexa] (0;2) at ({1.000000*sin(60)},{1.500000}) {}; 
 \node[hexa] (1;0) at ({1.000000*sin(60)},{0.000000}) {}; 
 \node[hexa] (1;1) at ({1.500000*sin(60)},{0.750000}) {}; 
 \node[circle,Black,minimum size=1cm](7) at (0;0) {B};
 \node[circle,Black,minimum size=1cm](3) at (0;1) {G};
 \node[circle,Black,minimum size=1cm](4) at (0;2) {R};
 \node[circle,Black,minimum size=1cm](1) at (-1;1) {R};
 \node[circle,Black,minimum size=1cm](2) at (-1;2) {B};
 \node[circle,Black,minimum size=1cm](6) at (1;0) {R};
 \node[circle,Black,minimum size=1cm](5) at (1;1) {B};
\path [draw, <-,shorten >=-3pt,shorten <=-3pt]
(1) edge (2)
(2) edge (3)
(3) edge (1)
(3) edge (4)
(4) edge (5)
(5) edge (3)
(4) edge (2)
(3) edge (6)
(6) edge (7)
(7) edge (3)
(1) edge (7)
(6) edge (5);
\end{tikzpicture}
&
\begin{tikzpicture} [hexa/.style= {shape=regular polygon,regular polygon sides=6,minimum size=1.0cm, draw,inner sep=0,anchor=south,fill=white,rotate=30}]
 \node[hexa] (1;2) at ({2.000000*sin(60)},{1.500000}) {}; 
 \node[hexa] (2;2) at ({3.000000*sin(60)},{1.500000}) {}; 
 \node[hexa] (1;1) at ({1.500000*sin(60)},{0.750000}) {}; 
 \node[hexa] (2;1) at ({2.500000*sin(60)},{0.750000}) {}; 
 \node[hexa] (0;3) at ({1.500000*sin(60)},{2.250000}) {}; 
 \node[hexa] (1;3) at ({2.500000*sin(60)},{2.250000}) {}; 
 \node [circle,Black,minimum size=1cm] at (1;2) {$u$};
 \node [circle,Black,minimum size=1cm] at (2;2) {$b_3$};
 \node [circle,Black,minimum size=1cm] at (0;3) {$b_1$};
 \node [circle,Black,minimum size=1cm] at (1;3) {$v_1$};
 \node [circle,Black,minimum size=1cm] at (1;1) {$b_2$};
 \node [circle,Black,minimum size=1cm] at (2;1) {$v_2$};
\end{tikzpicture}
&
\raisebox{8ex}{
$
  \underbrace{0\;\dots\;0}_\text{private} \;\; \SBIT \;\; \smash{\overbrace{0\;\dots\;0}^{\text{borrow~}} \;\; \SBIT} \;\;
  \begin{cases}
   1 &\\
   \PIND&
  \end{cases}
$
}
\\
a) & b) & c)
\end{tabular}
\end{center}
\caption{Illustration of the $\frac43$-approximation algorithm.
a) The borrow pattern. Arrows show the direction of the flow of colors
   in Phase~2.
b) Part of a graph induced by nodes still having unprocessed requests
   after Phase~2.
c) The subsequence of advice bits connected to one node. The sequence of
   advice bits is a merge of such sequences.
}
\label{figure-induced-bipartite}
\end{figure}

Using the fact that $G_2$ does not contain triangles, we can now argue
 that $G_2$ is acyclic and hence bipartite. 
Assume to the contrary that $G_2$ does contain a cycle, $C$.
Assume without loss of generality that the \classred, \classgreen,
 \classblue coloring of the underlying hexagonal grid is as shown in
 Figure~\ref{figure-induced-bipartite} a) and let $u$ be a leftmost node of $C$.
Then, referring to Figure~\ref{figure-induced-bipartite} b), two of the
 nodes $v_1$, $v_2$, and $b_3$ must also be part of $C$. 
Note that $b_3$ cannot be part of $C$, since then there would be
 a triangle after Phase~$1$.
Thus, $u$, $v_1$, and $v_2$ are part of the cycle and hence receive at
 least $\FLOOR{\frac{\omega+1}{3}}+1$ requests each.

Since $u$ could not borrow enough colors from the nodes in the color class it is allowed
to borrow from, one of the $b$-nodes, say $b_j$, together with $u$
must have a total of at least $2\FLOOR{\frac{\omega+1}{3}}+1$ requests.
So, $b_j$ and $u$ must form a triangle together with either $v_1$ or $v_2$
so that the three nodes together have received a total of at
least $(2\FLOOR{\frac{\omega+1}{3}}+1) + (\FLOOR{\frac{\omega+1}{3}}+1)$
requests.
This quantity is strictly larger than $\omega$, contradicting the
definition of $\omega$.

This ends the argument that the algorithm is correct.

Since the total number of colors used is at most $3\FLOOR{\frac{\omega
 +1}{3}} + (\omega - 2\FLOOR{\frac{\omega +1}{3}}) \leq
 \frac{4\omega +1}{3}$, the algorithm is a $\frac43$-approximation
 algorithm.

\begin{algorithm}[thb]
\algsetup{indent=2em}
\begin{algorithmic}[1]
 \STATE {\bf Advice:} A sequence $B$ of
 bits classifying each request as to whether it should be colored using the node's own private colors, by borrowing, or in which partition it falls.
 \STATE {\bf Function} $\class(v)$
 \STATE \qquad {\bf return} $v$'s color class: \classred, \classgreen, or \classblue
 \STATE {\bf Function} $\borrow(c)$
 \STATE \qquad {\bf return} the next class in the wrap-around sequence \classred, \classgreen, or \classblue
 \STATE {\bf Function} $\colors(c)$
 \STATE \qquad {\bf return} the set of private colors of class~$c$
 \STATE {\bf Function} NextBit($B$)
 \STATE \qquad {\bf return} the next advice bit
 \FOR{ each node $v$}
   \STATE $\text{Phase}(v) = \text{1}$
 \ENDFOR
 \FOR{ $i = 1$ \TO $n$}
 \STATE Assume that the $i$th request, $r$, is to node~$v$
 \IF{$\text{Phase}(v) = 1$}
   \IF{$\text{NextBit}(B)=0$}
     \IF{$\colors(\class(v))\setminus f_{i-1}(v)=\emptyset$}
       \STATE add one color to each of the three sets of private colors
     \ENDIF
     \STATE give $r$ color $\min( \colors(\class(v)) \setminus f_{i-1}(v))$
   \ELSE
     \STATE $\text{Phase}(v) = 2$
     \STATE $\phasemin = 3 \, \SIZE{f_{i-1}(v)} +1$
     \STATE $\phasemax = 4 \, \SIZE{f_{i-1}(v)} +1$
   \ENDIF
 \ENDIF
 \IF{$\text{Phase}(v) = 2$}
   \IF{$\text{NextBit}(B)=0$}
     \STATE give $r$ color $\max( \colors(\borrow(\class(v)))
       \setminus f_{i-1}(v) )$
   \ELSE
     \STATE $\text{Phase}(v) = 3$
     \STATE $\text{upper}_v = \text{NextBit}(B)$  /* Store the partition of $v$ */
   \ENDIF
 \ENDIF
 \IF{$\text{Phase}(v) = 3$}
    \STATE /* Use \GEOA: */
    \IF{$\text{upper}_v=1$}
       \STATE give $r$ color $\max( \SET{\phasemin, \ldots,\phasemax} \setminus f_{i-1}(v))$
    \ELSE
       \STATE give $r$ color $\min( \SET{\phasemin, \ldots,\phasemax} \setminus f_{i-1}(v))$
    \ENDIF
   \ENDIF
 \ENDFOR
\end{algorithmic}
\caption{Combining \FPA and \GEOA to a $\frac{4}{3}$-competitive algorithm.}
\label{algo:hex-43}
\end{algorithm}

We now show how an online algorithm, given the right advice, can
 behave as the off\-line $\frac43$-approximation algorithm.
Note that the three phases of the offline $\frac43$-approximation
 algorithm are characterized by the coloring strategy (using the
node's own private colors,
 borrowing private colors from neighbors, or coloring a bipartite graph).
However, when requests arrive online, the nodes may not go from one phase to
 the next simultaneously.

\begin{theorem}\label{theorem:43-hex}
There is a $\frac{4}{3}$-competitive online
   algorithm for multi-coloring hexagonal graphs with advice
   complexity at most $n + 2\SIZE{V}$.
\end{theorem}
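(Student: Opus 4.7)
The plan is to argue that Algorithm~\ref{algo:hex-43} is a faithful online simulation of the offline $\frac{4}{3}$-approximation algorithm described above, with the advice string labelling every request by which of the three phases the offline algorithm would handle it in. Given such an advice string, two things have to be verified: (i) the produced coloring is legal and uses at most $\frac{4\omega+1}{3}$ colors, yielding a $\frac{4}{3}$-competitive ratio because $\OPT \geq \omega$; and (ii) the advice tape has length at most $n + 2|V|$.

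\textbf{Faithfulness and correctness.} Since the online algorithm does not know $\omega$ in advance, the private palette of each class starts empty and grows on demand (line~19) as Phase~1 requests arrive. Because the advice is constructed to match the offline algorithm, no node is labelled Phase~1 for more than $\FLOOR{(\omega+1)/3}$ of its requests, so by the instant the first node is advanced to Phase~2 each class's private palette has reached exactly $\FLOOR{(\omega+1)/3}$ colors and never grows thereafter. In particular, the values $\phasemin = 3\FLOOR{(\omega+1)/3}+1$ and $\phasemax = 4\FLOOR{(\omega+1)/3}+1$ computed locally at the moment of a Phase~1-to-Phase~2 transition are the same at every node that ever reaches Phase~2. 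Phase~2 borrowing then proceeds exactly as in \FPA -- drawn top-down from the next class in cyclic order -- and the \FPA correctness argument carries over verbatim. Once a node enters Phase~3, the partition bit stored in $\text{upper}_v$ selects its side of the bipartition of $G_2$, and the remaining requests are coloured just as in \GEOA within the additional range $\SET{\phasemin,\ldots,\phasemax}$ of size $\FLOOR{(\omega+1)/3}+1$. Summing the colors used gives at most $3\FLOOR{(\omega+1)/3} + \FLOOR{(\omega+1)/3}+1 \leq \frac{4\omega+1}{3}$, exactly as in the offline analysis.

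\textbf{Counting the advice bits.} I would charge bits per node. Let $n_v^{(j)}$ denote the number of requests to $v$ coloured in phase~$j$. Each Phase~1 or Phase~2 coloring consumes a single advice bit (of value~$0$), while every Phase~3 coloring consumes none. Beyond these coloring bits, at most three further bits are chargeable to $v$: one to mark the Phase~$1\to 2$ transition, one to mark the Phase~$2\to 3$ transition, and one to record $\text{upper}_v$. The crucial accounting point is that the second and third of these extras can only occur when $v$ actually reaches Phase~3, in which case $n_v^{(3)} \geq 1$ and the per-node total is $(n_v^{(1)}+n_v^{(2)}) + 3 = n_v - n_v^{(3)} + 3 \leq n_v + 2$; a node stopping at Phase~2 contributes $n_v + 1$, and a Phase~1-only node contributes exactly $n_v$. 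Summing over all nodes yields at most $n + 2|V|$ advice bits.

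\textbf{Main obstacle.} The delicate point is verifying that the staggered, asynchronous transitions of different nodes through the three phases still reproduce the offline coloring globally -- in particular, that the dynamically growing private palettes and the per-node computations of $\phasemin, \phasemax$ remain consistent across nodes that reach Phase~2 (or Phase~3) at completely different moments in the request sequence. Once this correspondence is pinned down, the competitive ratio follows directly from the known offline analysis, and the $n + 2|V|$ advice bound is immediate from the per-node counting above.
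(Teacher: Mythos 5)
Your proposal follows essentially the same route as the paper: it analyzes the same Algorithm~\ref{algo:hex-43}, simulating the offline $\frac43$-approximation with per-request phase-labelling advice, growing the private palettes on demand, fixing $\phasemin$ and $\phasemax$ at the first Phase-1-to-Phase-2 transition, and charging one bit per Phase-1/2 request plus at most two extra bits (stop bits and partition bit, net of the guaranteed Phase-3 request) per node to get $n+2\SIZE{V}$. The argument is correct, and your per-node accounting is in fact slightly more explicit than the paper's.
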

\begin{proof}
We describe the algorithm and advice resulting in a coloring with at
 most $\frac43 \OPT$ colors
 (see Algorithm~\ref{algo:hex-43}, where we use the $f$-notation
 defined in connection with
 Algorithm~\ref{algorithm-optimal-advice-bipartite}).

Initially, each node is in Phase~$1$. On a request, the algorithm
reads an advice bit
and if it is zero, the next color from its private colors is used.
If, instead, a one is read, this is treated as a stop bit for Phase~$1$,
and this particular node enters Phase~$2$.

The algorithm starts with empty private color sets,
and adds one color to each set whenever necessary, i.e., whenever a
Phase~1 node that has already used all its private colors receives an
additional request (this includes the first request to the node).
As soon as a node leaves Phase~1,
the algorithm knows that this node received
$\FLOOR{\frac{\omega + 1}{3}}$ requests,
which is then the final size of each private color set.
Knowing the size of the private color sets, 
the algorithm can calculate the maximum color for the complete
coloring of the graph as $m = 4\FLOOR{\frac{\omega +1}{3}}+1$. 

In Phase~2, every zero indicates that the algorithm should borrow a color. When another stop bit
is received (which could be after no zeros at all if the borrowing
phase is empty), it moves to Phase~$3$.
In Phase~3, it reads one bit to decide which partition, upper or lower, of
the bipartite graph it is in, and does not need more information after that,
since it simply uses the colors $3\FLOOR{\frac{\omega +
    1}{3}}+1, \ldots, m$,
either top-down or bottom-up.

If we allow the algorithm one bit per request,
it needs at most two more bits per node, since the stop bits are the
only bits that do not immediately tell the algorithm which action to take.
Thus, $n + 2\SIZE{V}$ bits of advice suffice.
\end{proof}

This algorithm can be used in many different ways,
as long as the algorithm gets the information it needs.
One other simple encoding would be to give the algorithm the value
$\FLOOR{\frac{\omega + 1}{3}}$ from the beginning and 
only give bit-wise advice after a node has used all its private colors.
Since at least one color is private, this will save a total of at least $\SIZE{V}$ bits,
and result in at most
$\ENCODE{\FLOOR{\frac{\omega + 1}{3}}} + n + \SIZE{V}$ bits of advice.
This variant, and others, that are incomparable to each other,
depending on the values of $n$, $\omega$, and $\SIZE{V}$,
could all be used at the same time by first asking for a few bits
to decide how to proceed. Thus, one could formulate a less
readable but more accurate theorem basically taking the minimum
of all the expressions. We have chosen clarity over precision,
since the other expressions are mostly better in less interesting cases,
where $n$ is small compared to $\SIZE{V}$, for instance.


\subsection{Concluding Remarks}

When considering advice complexity of multi-coloring on a path,
we can achieve $1$-competitiveness with a small amount of advice.
A recoloring algorithm needs to be $1$-recoloring to achieve the same. 
The advice is basically the maximum number of requests to any
two neighboring nodes.
Thus, whether one has that global information once and for all,
or can obtain and adjust according to the local variant of this
information gives the same result.

For multi-coloring of hexagonal graphs, there is a similar connection
between recoloring distance and advice.
The $1$-recoloring online version of \FPA has an advice variant and again,
this advice represents information about the maximum number of requests
to neighboring nodes. 
With additional global information about the bipartite induced subgraph,
we can overcome the limitations of $1$-recoloring algorithms
and be as good as any known polynomial-time approximation algorithm.

\bibliography{refs-short}

\begin{thebibliography}{10}

\bibitem{AFG05}
S.~Albers, L.M. Favrholdt, and O.~Giel.
\newblock On paging with locality of reference.
\newblock {\em Journal of Computer and System Sciences}, 70(2):145--175, 2005.

\bibitem{BBFGHKSS14}
K.~Barhum, H.-J. B{\"o}ckenhauer, M.~Forisek, H.~Gebauer, J.~Hromkovi\v{c},
  S.~Krug, J.~Smula, and B.~Steffen.
\newblock On the power of advice and randomization for the disjoint path
  allocation problem.
\newblock In {\em SOFSEM}, volume 8327 of {\em LNCS}, pages 89--101. Springer,
  2014.

\bibitem{BBHK12}
M.~Paola Bianchi, H.-J. B{\"o}ckenhauer, J.~Hromkovic, and L.~Keller.
\newblock Online coloring of bipartite graphs with and without advice.
\newblock In {\em COCOON}, volume 7434 of {\em LNCS}, pages 519--530, 2012.

\bibitem{BKKK11}
H.-J. B{\"o}ckenhauer, D.~Komm, R.~Kr{\'a}lovi\v{c}, and R.~Kr{\'a}lovi\v{c}.
\newblock On the advice complexity of the $k$-server problem.
\newblock In {\em ICALP}, volume 6755 of {\em LNCS}, pages 207--218, 2011.

\bibitem{BKKKM09}
H.-J. B{\"o}ckenhauer, D.~Komm, R.~Kr{\'a}lovi\v{c}, R.~Kr{\'a}lovi\v{c}, and
  T.~M{\"o}mke.
\newblock On the advice complexity of online problems.
\newblock In {\em ISAAC}, volume 5878 of {\em LNCS}, pages 331--340, 2009.

\bibitem{BKKR12}
H.-J. B{\"o}ckenhauer, D.~Komm, R.~Kr{\'a}lovi\v{c}, and P.~Rossmanith.
\newblock On the advice complexity of the knapsack problem.
\newblock In {\em LATIN}, volume 6139 of {\em LNCS}, pages 61--72, 2012.

\bibitem{BIRS95}
A.~Borodin, S.~Irani, P.~Raghavan, and B.~Schieber.
\newblock Competitive paging with locality of reference.
\newblock {\em Journal of Computer and System Sciences}, 50(2):244--258, 1995.

\bibitem{BFLN03j}
J.~Boyar, L.M. Favrholdt, K.S. Larsen, and M.N. Nielsen.
\newblock {Extending the Accommodating Function}.
\newblock {\em Acta Informatica}, 40(1):3--35, 2003.

\bibitem{BGL12p}
J.~Boyar, S.~Gupta, and K.S. Larsen.
\newblock Access graphs results for {LRU} versus {FIFO} under relative worst
  order analysis.
\newblock In {\em SWAT}, volume 7357 of {\em LNCS}, pages 328--339. Springer,
  2012.

\bibitem{BKLL12}
J.~Boyar, S.~Kamali, K.S. Larsen, and A.~L{\'o}pez-Ortiz.
\newblock Online bin packing with advice.
\newblock In {\em STACS}, volume~25 of {\em LIPIcs}, pages 174--186. Schloss
  Dagstuhl -- Leibniz-Zentrum f{\"{u}}r Informatik GmbH, 2014.

\bibitem{BL99j}
J.~Boyar and K.S. Larsen.
\newblock {The Seat Reservation Problem}.
\newblock {\em Algorithmica}, 25(4):403--417, 1999.

\bibitem{BLN01j}
J.~Boyar, K.S. Larsen, and M.N. Nielsen.
\newblock {The Accommodating Function: a generalization of the competitive
  ratio}.
\newblock {\em SIAM Journal on Computing}, 31(1):233--258, 2001.

\bibitem{CCYZ10}
J.W.-T Chan, F.Y.L. Chin, D.~Ye, and Y.~Zhang.
\newblock Absolute and asymptotic bounds for online frequency allocation in
  cellular networks.
\newblock {\em Algorithmica}, 58(2):498--515, 2010.

\bibitem{CCYZZ06p}
J.W.-T. Chan, F.Y.L. Chin, D.~Ye, Y.~Zhang, and H.~Zhu.
\newblock Frequency allocation problems for linear cellular networks.
\newblock In {\em ISAAC}, volume 4288 of {\em LNCS}, pages 61--70. Springer,
  2006.

\bibitem{CFL13}
M.G. Christ, L.M. Favrholdt, and K.S. Larsen.
\newblock Online multi-coloring on the path revisited.
\newblock {\em Acta Informatica}, 50(5--6):343--357, 2013.

\bibitem{CJS13}
M.~Chrobak, L.~Jez, and J.~Sgall.
\newblock Better bounds for incremental frequency allocation in bipartite
  graphs.
\newblock {\em Theoretical Computer Science}, 514:75--83, 2013.

\bibitem{CS10}
M.~Chrobak and J.~Sgall.
\newblock Three results on frequency assignment in linear cellular networks.
\newblock {\em Theoretical Computer Science}, 411(1):131--137, 2010.

\bibitem{DKP09}
S.~Dobrev, R.~Kr{\'a}lovi\v{c}, and D.~Pardubsk{\'a}.
\newblock Measuring the problem-relevant information in input.
\newblock {\em RAIRO Theoretical Informatics and Applications}, 43(3):585--613,
  2009.

\bibitem{DHZ12}
R.~Dorrigiv, M.~He, and N.~Zeh.
\newblock On the advice complexity of buffer management.
\newblock In {\em ISAAC}, volume 7676 of {\em LNCS}, pages 136--145, 2012.

\bibitem{E75}
P.~Elias.
\newblock Universal codeword sets and representations of the integers.
\newblock {\em IEEE Transactions on Information Theory}, 21(2):194--203, 1975.

\bibitem{EFKR11}
Y.~Emek, P.~Fraigniaud, A.~Korman, and A.~Ros{\'e}n.
\newblock Online computation with advice.
\newblock {\em Theoretical Computer Science}, 412(24):2642--2656, 2011.

\bibitem{FKS12}
M.~Forisek, L.~Keller, and M.~Steinov{\'a}.
\newblock Advice complexity of online coloring for paths.
\newblock In {\em LATA}, volume 7183 of {\em LNCS}, pages 228--239, 2012.

\bibitem{HKK10}
J.~Hromkovi\v{c}, R.~Kr\'{a}lovi\v{c}, and R.~Kr\'{a}lovi\v{c}.
\newblock Information complexity of online problems.
\newblock In {\em MFCS}, volume 6281 of {\em LNCS}, pages 24--36, 2010.

\bibitem{JKM99}
J.~Janssen, K.~Kilakos, and O.~Marcotte.
\newblock Fixed preference channel assignment for cellular telephone systems.
\newblock {\em IEEE Transactions on Vehicular Technology}, 48(2):533--541,
  1999.

\bibitem{JKNS00j}
J.~Janssen, D.~Krizanc, L.~Narayanan, and S.M. Shende.
\newblock Distributed online frequency assignment in cellular networks.
\newblock {\em Journal of Algorithms}, 36(2):119--151, 2000.

\bibitem{KMRS88j}
A.R. Karlin, M.S. Manasse, L.~Rudolph, and D.D. Sleator.
\newblock Competitive snoopy caching.
\newblock {\em Algorithmica}, 3:79--119, 1988.

\bibitem{KK11j}
D.~Komm and R.~Kr{\'a}lovi\v{c}.
\newblock Advice complexity and barely random algorithms.
\newblock {\em RAIRO Theoretical Informatics and Applications}, 45(2):249--267,
  2011.

\bibitem{KKM12}
D.~Komm, R.~Kr{\'a}lovi\v{c}, and T.~M{\"o}mke.
\newblock On the advice complexity of the set cover problem.
\newblock In {\em CSR}, volume 7353 of {\em LNCS}, pages 241--252, 2012.

\bibitem{BBHKS13}
{M. P. Bianchi and H.-J. B\"{o}ckenhauer and J. Hromkovi\v{c} and S. Krug and
  B. Steffen}.
\newblock On the advice complexity of the online $l(2,1)$-coloring problem on
  paths and cycles.
\newblock In {\em COCOON}, volume 7936 of {\em LNCS}, pages 53--64. Springer,
  2013.

\bibitem{McDR00}
C.~McDiarmid and B.A. Reed.
\newblock Channel assignment and weighted coloring.
\newblock {\em Networks}, 36(2):114--117, 2000.

\bibitem{N02}
L.~Narayanan.
\newblock {\em Channel Assignment and Graph Multicoloring}, pages 71--94.
\newblock John Wiley \& Sons, Inc., 2002.

\bibitem{NS01}
L.~Narayanan and S.M. Shende.
\newblock Static frequency assignment in cellular networks.
\newblock {\em Algorithmica}, 29(3):396--409, 2001.

\bibitem{NS02}
L.~Narayanan and S.M. Shende.
\newblock Corrigendum: Static frequency assignment in cellular networks.
\newblock {\em Algorithmica}, 32(4):679, 2002.

\bibitem{SSU13}
S.~Seibert, A.~Sprock, and W.~Unger.
\newblock Advice complexity of the online coloring problem.
\newblock In {\em CIAC}, volume 7878 of {\em LNCS}, pages 345--357, 2013.

\bibitem{ST85j}
D.D. Sleator and R.E. Tarjan.
\newblock Amortized efficiency of list update and paging rules.
\newblock {\em Communications of the ACM}, 28(2):202--208, 1985.

\bibitem{SZ05j}
P.~Sparl and J.~Zerovnik.
\newblock 2-local 4/3-competitive algorithm for multicoloring hexagonal graphs.
\newblock {\em Journal of Algorithms}, 55(1):29--41, 2005.

\bibitem{WZ11p}
R.~Witkowski and J.~Zerovnik.
\newblock 1-local 33/24-competitive algorithm for multicoloring hexagonal
  graphs.
\newblock In {\em WAW}, volume 6732 of {\em LNCS}, pages 74--84, 2011.

\end{thebibliography}
\bibliographystyle{plain}

\end{document}